 \newtheorem{thm}{Theorem}[section]
 \newtheorem{prop}[thm]{Proposition}
 \theoremstyle{definition}
 \theoremstyle{remark}
 \numberwithin{equation}{section}
 \DeclareMathOperator{\RE}{Re}
 \DeclareMathOperator{\IM}{Im}
 \newcommand{\h}{\mathcal{H}}
 \newcommand{\abs}[1]{\left\vert#1\right\vert}
\title{Classical Elastic Two-Particle Collision Energy Conservation using Edward Nelson's Energy, Double Diffusion and Special Relativity}
\author{J\lowercase{ohan} G.B. B\lowercase{eumee}*, H\lowercase{erschel} R\lowercase{abitz} \\ \\
\textit{\textsl{P\lowercase{rinceton} U\lowercase{niversity}}}
\\
\textit{\textsl{P\lowercase{rinceton}, N\lowercase{ew} J\lowercase{ersey}} }
\\
\textit{\textsl{USA} }}
\date{}
\begin{document}
\maketitle

\fancyhead[RE, RO]{\tiny \textbf{Classical Elastic Two-Particle Collision Energy ... \thepage}}

\providecommand{\keywords}[1]
{
  \small	
  \textbf{\textit{Keywords---}} #1
}
\begin{abstract}
\smallskip
The present paper shows that Edward Nelson’s stochastic mechanics approach for
quantum mechanics can be derived from two classical elastically colliding particles with
masses M and m satisfying a collision momentum-preserving equation. The properties
of the classical elastic momentum collision expression determine the full Edward Nelson
energy collision energy for both particles. The classical Edward Nelson total energy ex-
pression does not require a statistical expectation since no process was defined for the
energy and it models the main and incident particle velocities perfectly. The classical
Edward Nelson energy expression can easily derive quantum mechanics as a special case
and show classical physics results for mechanics, collision theory, cosmology and relativity. Quantum mechanics can be obtained by modelling the incident particle as a (non-random) potential using stochastic processes modelling the forward (post-collision) and backward (pre-collision) velocities of the main particle. This presents the Schr\"odinger equation exactly the way that Nelson proposed in 1966 except for the diffusion constant. In this case the average energy is conserved in time and the forward (post-collision) and backward (pre-collision) velocities of the system are related using statistical methods. If the incident particle does not have a potential it would be convenient to assume another stochastic process for the incident particle again requiring that the average energy is preserved. In this case the additional constraints for the movement of the incident particle leads to another Schr\"odinger equation. Two Schr\"odinger equations would reserve the energy conditions and two potentials may be introduced as well since the behaviour of independent particles is independent (correlation equals zero). Finally, under suitable conditions it will be shown that the colliding particles satisfy Minkowski’s metric in special relativity. This last example shows how gravity can be quantized using details of this energy expression.
\end{abstract}

\keywords{forward stochastic processes, backward stochastic processes, classical energy conservation, classical momentum conservation, heatbath, diffusion, elastic collisions, Edward Nelson energy measure, Minkowski's metric, Special Relativity}

\section*{Introduction}
\smallskip
To find a classical environment for the physics of Edward Nelson's stochastic mechanics consider the classical double particle elastic collision between a main particle of mass $M$ and the incident particle $m$. Both particles are perfectly spherical objects so the perfect elastic classical collision only deals with momentum and ignores angular momentum or spin. During the collision the projected velocity terms of the main velocity $v_2, w_2$ and incident velocities $v_1, w_1$ collide along the (one-dimensional) collision axis $\phi$ representing the center of mass connection leaving the remaining momentum and energy undisturbed. In section 1 the first result uses the projection $P(\phi)$ on the collision axis $\phi$ to show the classical momentum conservation for the two classical colliding particles. The paper then shows in Appendices A, B that the momentum conservation equation implies that the total energy at the time of collision equals the Edward Nelson stochastic mechanics energy. In other words the momentum preserving equation is instrumental in representing the main and incident particle energy into the Edward Nelson stochastic mechanics energy without referring to quantum mechanics.

Section 2 shows that the original Nelson quantum mechanical Schr\"odinger equation can be obtained assuming that the second particle can be approximated as a (non-random) potential while the main particle motion is represented by a single diffusion process. The energy expression has the familiar form of a forward and backward (drift) velocities for the main particle and a potential term. The diffusion parameters for the Schr\"{o}dinger equations has the same diffusion parameter $\eta$ but the variance of the particles is different. The elements in the energy sums are weighted by the particle masses $M$, $m$ and the proof of this conserved energy is demonstrated in Appendix C. Nelson showed in 1966 that the diffusion constant can be adapted to derive Schr\"odinger's equation by choosing $\eta = \hbar$. This demonstrated that quantum mechanics can be derived from the forward and backward drifts of very simple stochastic processes with the drifts showing the correct pre- and post collision velocities.

As section 3 demonstrates, if the incident particle is represented with a stochastic process instead of a (non-random) potential  the particle can behave very similar to the main particle. A straightforward solution is to choose independent diffusion processes for both the main and incident particle and then investigate using the forward and backward Fokker-Planck equations as well as the diffusion rate of these equations. So there are two processes for the main and incident particles and the probability density to constrain the energy means solving a double set of Schr\"{o}dinger equations. The diffusion parameters for the Schr\"{o}dinger equations for both particles have the same diffusion parameter $\eta$ but the variance of the particles is different. This paper shows that it is possible to combine many types of Schr\"odinger equations for a solution of the combined equation.

Section 4 shows that a main particle moving through a medium of incident particles is required to be correlated to the incident particle. First assume that the correlation equals zero, then the momentum equation requires that the outgoing particle $M\abs{v}^2\approx m\abs{w}^2$ so that the main particle adopts the environment energy. If the main particle moves too slow it will accelerate and otherwise it will slow down. Now assume that the particle has a constant speed and an undetermined correlation. The correlation between the main and incident particle is complicated for masses where $m \approx M$ but a very clear approximation can be derived when the incident particle is small so that $m << M$. In that case the mass ratio becomes $\gamma_m^2=m/M \ll 1$ and the correlation between the motion of the main particle with the incident particle is almost linear.

Section 5 shows that the conserved classical momentum equation can be solved employing the main particle Eigenvalues / Eigenvectors of the main scattering matrix showing both the average motion of both particles and a set of vectors that are invariant on the average velocity. For the Eigenvalue equal 1 the supporting Eigenvector equals $(a,a)$ where $(1+\gamma^2)a = Mv_1+mw_1 = Mv_2+mw_2$ hence this Eigenvector occurs in the ingoing and outgoing velocities. The second Eigenvalue equals -1 and the supporting  Eigenvector becomes $(-\gamma_m^2g,g)$ where $g \sim v_1-w_1$. Then if the incoming vector becomes proportional to the sum of $(a,a)$ and $(-\gamma_m^2g,g)$, the outgoing set of velocities becomes the sum of $(a,a)$ and $(\gamma_m^2g^\bot,-g^\bot)$ where $g^\bot \sim (w_2-v_2)$. The paper shows that $\abs{g}^2=\abs{g^\bot}^2$ and this equality is used to determine the Minkowski equality.

As mentioned above, the original Nelson result orignated a large activity around stochastic mechanics Nelson~\cite{ENELSON3}, ~\cite{ENELSON2},~\cite{ENELSON1}, Carlen~\cite{ECARL1}, ~\cite{ECARL2}, ~\cite{ECARL3}, Guerra~\cite{GUERRA1}, ~\cite{GUERRA2}, Vaidman~\cite{Vaidman1}, Smolin~\cite{SMOLIN1}, Albert~\cite{ALBERT1}. Carlen in particular emphasized that this construction for the examples would almost always have weak solutions as long as the potentials have the Kato-Rellich property. There was an effort to derive quantum mechanics from first principles Albeverio~\cite{ALBEVERIO1}, ~\cite{ALBEVERIO2}, ~\cite{ALBEVERIO3}, Morato~\cite{MORATO2}, Blanchard~\cite{BLANCHARD1}, Chen~\cite{CHEN1}, Davidson~\cite{DAVIDSON1}, de la Pena~\cite{DELAPENA1}, Dohrn~\cite{DOHRN1}, Zambrini~\cite{ZAMBRINI1}, ~\cite{ZAMBRINI2}, Mansi~\cite{MANSI1} and Baccigaluppi~\cite{BACCIAGALUPPI1}. Earlier efforts started with Fenyes~\cite{FENYES1} and stochastic mechanics was criticised by Wallstrom~\cite{WALLSTROM1}. The use of stochastic processes in physics are extensively referenced by Einstein~\cite{EINST1}, Avrahan~\cite{AVRAHAN1}, Beumee~\cite{BEUMEE1}, Carmona~\cite{CARMONA1}, Feller~\cite{FELLER1}, Feynman~\cite{FEYNMAN1}, Goldstein~\cite{GOLDSTEIN1}, van Kampen~\cite{KAMPEN1}, Lanczos~\cite{LANCZOS1}, Posilicano~\cite{POSILICANO1} and Rogers~\cite{ROGERS1}, Shreve~\cite{SHREVE1}, Thide~\cite{THIDE1} and Vastola~\cite{VASTOLA1} among many others.

Further work considered particle spin Fritsche~\cite{FRITSCHE1}, relativity Garbaczewski~\cite{GARBA1},~\cite{GARBA2}, Kuipers~\cite{KUIPERS1}, Gamba~\cite{GAMBA1}, Marra~\cite{MARRA1}, Serva~\cite{SERVA1}, Minkowski~\cite{GUERRA3}, stochastic variations Morato~\cite{MORATO1}, Oriols~\cite{ORIOLS1}, Bhattacharya~\cite{BHATTACHARYA1} and chaos analysis by Nottale~\cite{NOTTALE1}. In addition Ohsumi~\cite{OHSUMI1} investigated stochastic system theory, Pavon~\cite{PAVON1} investigated the Feynman path integral and stochastic entanglement was investigated by Penrose~\cite{PENROSE1}. On cosmological matters stochastic mechanics investigated dark matter, see Paredes~\cite{PAREDES1}, Cresson~\cite{CRESSON1}, ~\cite{CRESSON2}, Chavanis~\cite{CHAVANIS1}, Chamaraux~\cite{CHAMARAUX1} and there were studies for Schr\"{o}dinger-Newton Robertshaw~\cite{ROBERTSHAW1}, Tod\cite{TOD1} and Harrison~\cite{HARRISON1}. There was also work on biology Roy~\cite{ROY1}, Caticha~\cite{CATICHA2}, quantum gravity Calcagni~\cite{CALCAGNI11}, Colella~\cite{COLELLA1}, field theory by Dijkgraaf~\cite{DIJKGRAAF1} and philosophy Puthoff~\cite{PUTHOFF1}. A recent thesis focussed on stochastic mechanics Derakhshani~\cite{DERAKHSHANI1}, ~\cite{DERAKHSHANI2}, ~\cite{DERAKHSHANI3} and research was done on the Burgers equation by Prodanov~\cite{PRODANOV1}.

\medskip

\section{Energy/Momentum Conservation}
This section shows the momentum distribution for classical elastic collision Hamiltonian assuming the main particle of mass $M$ and the incident particle of mass $m$. For classical particles the classical momenta is required to match the momenta $\mathcal{P}_1, \mathcal{P}_2$ and the energy terms $\h_{1}, \h_{2}$. This means that the main particle $v_2,v_1\in\Re^3$ and the incident particle $w_1,w_2\in\Re^3$ satisfy
\begin{subequations}
\label{l:ENERG1P}
\begin{align}
\label{l:ENERG1a}
&\mathcal{P}_1=p_1+q_1=Mv_1+mw_1,
\\
\label{l:ENERG1b}
&\mathcal{P}_2=p_2+q_2=Mv_2+mw_2,
\\
\label{l:ENERG1c}
&\h_{1}=\frac{1}{2}\left(M\abs{v_2}^2+m\abs{w_2}^2\right),
\\
\label{l:ENERG1d}
&\h_{2}=\frac{1}{2}\left(M\abs{v_1}^2+m\abs{w_1}^2\right),
\end{align}
\end{subequations}
and the classical elastic collision momentum and energy conservation requires that $\mathcal{P}_2=\mathcal{P}_1$ and $\h_{2}=\h_{1}$.

Imagine, that the collision between the perfectly spherically particles happens elastically where the main particle with mass $M$ and the incident particle with mass $m$ collide with hard surfaces. This model assumes that angular momentum or spin is not incorporated. The elastic collisions of exchanging the momentum and energy only occurs in the direction $\phi \in \Re^3$ stretching from the center of the main particle to the center of the incident particle. To conveniently arrange for motion along a random unit-size vector $\phi$ consider the projection $P(\phi)=\phi\phi^T$ which maps all vectors onto the vector $\phi$. Clearly $P(\phi)$ is a projection because $P(\phi)P(\phi)=(\phi\phi^T)(\phi\phi^T)=\phi(\phi^T\phi)\phi^T=P(\phi)$ since $\phi$ is a unit vector. Also $P(\phi)(I-P(\phi))=(P(\phi)-P(\phi)) = 0$ so the operator $(I-P(\phi))$ is always orthogonal to $P(\phi)$ hence $(I-P(\phi))v\perp P(\phi)w$ for all three dimensional vectors $v, w \in \Re^3$.

To implement the conservation law along the $\phi$ axis we decompose the $v_1,v_2,w_1$ and $w_2$  vectors as follows
\begin{align*}
v_1&= P(\phi)v_1 + (I-P(\phi))v_1,
\\
v_2&= P(\phi)v_2 + (I-P(\phi))v_2,
\\
w_1&= P(\phi)w_1 + (I-P(\phi))w_1,
\\
w_2&= P(\phi)w_2 + (I-P(\phi))w_2,
\end{align*}
and for all classical momentum and energy exchanged orthogonal to $\phi$ clearly
\begin{align*}
(I-P(\phi))v_2 & = (I-P(\phi))v_1,
\\
(I-P(\phi))w_2 & = (I-P(\phi))w_1.
\end{align*}

Along $\phi$ the conservation of momentum $\mathcal{P}_1, \mathcal{P}_2$ and energy $\h_1, \h_2$ leads to
\begin{align*}
MP(\phi)v_1 + mP(\phi)w_1 &= MP(\phi)v_2 + mP(\phi)w_2,
\\
Mv_1^TP(\phi)v_1 + mw_1^TP(\phi)w_1 &= Mv_2^TP(\phi)v_2 + mw_2^TP(\phi)w_2.
\end{align*}
The last equation leads to
\begin{align*}
M\phi^Tv_1 + m\phi^Tw_1 &= M\phi^Tv_2 + m\phi^Tw_2,
\\
M(\phi^Tv_1)^2 + m(\phi^Tw_1)^2 &= M(\phi^Tv_2)^2 + m(\phi^Tw_2)^2,
\end{align*}
making use of the $P(\phi)$ projection property again by removing $\phi$ from the first equation while the second equation was created by substituting $P(\phi)=\phi^T\phi$. Now the conservation problem is one-dimensional and Appendix A shows that the solution equals
\begin{align}
\label{App14}
\begin{pmatrix}
\phi^Tv_2 \\
\phi^Tw_2
\end{pmatrix}
&= \begin{pmatrix}
\cos(\theta) & \gamma_m\sin(\theta) \\
\frac{\sin(\theta)}{\gamma_m} & -\cos(\theta)
\end{pmatrix}
\begin{pmatrix}
\phi^Tv_1\\
\phi^Tw_1
\end{pmatrix},
\end{align}
where $\gamma_m^2 = m/M$, $\sin\left(\theta\right)
=2\gamma_m/\left(1+\gamma_m^2\right)$,
$\cos\left(\theta\right)=\left(1-\gamma_m^2\right)/\left(1+\gamma_m^2\right)$ and $\phi$ is the main to incident particle unit vector. The matrix here is referred as the collision matrix. Notice that this equation does not reflect the angular momentum assuming that this motion is independent of the particle momenta. From that point of view this equation is an approximation. The results are summarized in the following Theorem.

\begin{thm}
\label{l:THEOREM3}
In three dimensions for a classical elastic collision where $\mathcal{P}_1=\mathcal{P}_2$ (classical momentum) and $\h_1=\h_2$ it follows that the output for the main and incident particles $v_2,w_2$ must be related to the input variables $v_1$ and $w_1$. In this case
\begin{align}
\label{l:EQMOT9}
\begin{split}
\begin{pmatrix}
v_2 \\
w_2
\end{pmatrix}
&= \begin{pmatrix}
\cos(\theta) & \gamma_m\sin(\theta) \\
\frac{\sin(\theta)}{\gamma_m} & -\cos(\theta)
\end{pmatrix}
\begin{pmatrix}
v_1\\
w_1
\end{pmatrix}
+
\begin{pmatrix}
\gamma_m \Phi
\\
-\frac{1}{\gamma_m} \Phi
\end{pmatrix}
\\
&=\begin{pmatrix}
I-\gamma_m\sin(\theta)P(\phi) & \gamma_m\sin(\theta)P(\phi) \\
\frac{\sin(\theta)}{\gamma_m}P(\phi) & I-\frac{\sin(\theta)}{\gamma_m}P(\phi)
\end{pmatrix}
\begin{pmatrix}
v_1
\\
w_1
\end{pmatrix},
\end{split}
\end{align}
where $I$ is the unit matrix and $\Phi = \sin(\theta) (I-P(\phi))(v_1-w_1)$ is a 3-dimensional vector so that $\Phi^T(v_2-v_1)=0$. Here $P(\phi)$ is the random projection in the direction of $\phi$ with $\gamma_m^2 = m/M$, $\sin\left(\theta\right) =2\gamma_m/\left(1+\gamma_m^2\right)$ and $\cos\left(\theta\right)=\left(1-\gamma_m^2\right)/\left(1+\gamma_m^2\right)$.
\end{thm}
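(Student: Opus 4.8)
\smallskip
\noindent\textit{Proof strategy.}\quad The plan is to lift the one-dimensional collision relation \eqref{App14}, established in Appendix~A, to three dimensions by exploiting the orthogonal decomposition along $\phi$ that was set up above. Recall that each velocity splits as $u = P(\phi)u + (I-P(\phi))u$, that the transverse pieces are unchanged by the collision, $(I-P(\phi))v_2=(I-P(\phi))v_1$ and $(I-P(\phi))w_2=(I-P(\phi))w_1$, and that $P(\phi)u = \phi\,(\phi^Tu)$, so the longitudinal pieces are governed entirely by the scalars $\phi^Tv_1,\phi^Tw_1,\phi^Tv_2,\phi^Tw_2$ already handled by \eqref{App14}.

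First I would substitute \eqref{App14} into $P(\phi)v_2 = \phi\,(\phi^Tv_2)$ and $P(\phi)w_2 = \phi\,(\phi^Tw_2)$. Using $\phi\,(\phi^Tv_1) = P(\phi)v_1$ and $\phi\,(\phi^Tw_1)=P(\phi)w_1$ this gives at once
\begin{align*}
P(\phi)v_2 &= \cos(\theta)P(\phi)v_1 + \gamma_m\sin(\theta)P(\phi)w_1,\\
P(\phi)w_2 &= \tfrac{\sin(\theta)}{\gamma_m}P(\phi)v_1 - \cos(\theta)P(\phi)w_1.
\end{align*}
Adding back the unchanged transverse parts, $v_2 = P(\phi)v_2 + (I-P(\phi))v_1$ and $w_2 = P(\phi)w_2 + (I-P(\phi))w_1$, and rewriting $(I-P(\phi))v_1 = v_1 - P(\phi)v_1$, $(I-P(\phi))w_1 = w_1 - P(\phi)w_1$, expresses $v_2$ and $w_2$ as explicit linear combinations of $v_1$, $w_1$, $P(\phi)v_1$, $P(\phi)w_1$; this is already the skeleton of both displayed forms.

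To match the second (projection-matrix) form of \eqref{l:EQMOT9} verbatim I need the two elementary identities $1-\cos(\theta) = \gamma_m\sin(\theta)$ and $1+\cos(\theta) = \sin(\theta)/\gamma_m$, each of which follows directly on inserting $\sin(\theta)=2\gamma_m/(1+\gamma_m^2)$ and $\cos(\theta)=(1-\gamma_m^2)/(1+\gamma_m^2)$. With these the coefficient of $P(\phi)v_1$ in $v_2$ becomes $\cos(\theta)-1=-\gamma_m\sin(\theta)$ and the coefficient of $P(\phi)w_1$ in $w_2$ becomes $-\cos(\theta)-1=-\tfrac{\sin(\theta)}{\gamma_m}$, which is exactly the block matrix with rows $\bigl(I-\gamma_m\sin(\theta)P(\phi),\ \gamma_m\sin(\theta)P(\phi)\bigr)$ and $\bigl(\tfrac{\sin(\theta)}{\gamma_m}P(\phi),\ I-\tfrac{\sin(\theta)}{\gamma_m}P(\phi)\bigr)$. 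To recover the first form I would then split $P(\phi)=I-(I-P(\phi))$ inside those coefficients: the $I$-parts reassemble the collision matrix $\bigl(\cos\theta,\ \gamma_m\sin\theta;\ \tfrac{\sin\theta}{\gamma_m},\ -\cos\theta\bigr)$ acting on $(v_1,w_1)$, and the leftover $(I-P(\phi))$-parts, after one more use of the two identities, collapse to precisely $\gamma_m\Phi$ and $-\gamma_m^{-1}\Phi$ with $\Phi=\sin(\theta)(I-P(\phi))(v_1-w_1)$. The orthogonality assertion $\Phi^T(v_2-v_1)=0$ is then read off the projection form: $v_2-v_1=\gamma_m\sin(\theta)P(\phi)(w_1-v_1)$ lies in the range of $P(\phi)$ while $\Phi$ lies in the range of $I-P(\phi)$, and these subspaces are orthogonal, as noted above.

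I do not expect a serious obstacle: the real content is the one-dimensional reduction of Appendix~A (which also selects the non-trivial ``collision'' branch of the quadratic energy constraint rather than the trivial pass-through solution), and everything afterwards is linear bookkeeping. The only mildly delicate points are (i) being consistent about which solution branch \eqref{App14} encodes, and (ii) keeping the two equivalent matrix representations aligned; I would present the projection-matrix form first, since it falls out of the decomposition with no rearrangement, and derive the collision-matrix-plus-$\Phi$ form from it using the two half-angle-type identities above.
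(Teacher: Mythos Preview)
Your proposal is correct and follows essentially the same route as the paper's Appendix~A: reduce to the one-dimensional relation along $\phi$, reinstate the unchanged transverse components, and use the identities $1-\cos\theta=\gamma_m\sin\theta$, $1+\cos\theta=\sin\theta/\gamma_m$ to pass between the two displayed forms. The only cosmetic differences are that the paper derives the collision-matrix-plus-$\Phi$ form first and the projection form second (your order is reversed), and that the paper verifies $\Phi^T(v_2-v_1)=0$ by an explicit expansion (its equation~\eqref{l:DRFT24}) rather than by your cleaner observation that $v_2-v_1=\gamma_m\sin(\theta)P(\phi)(w_1-v_1)$ lies in $\operatorname{ran}P(\phi)$ while $\Phi\in\operatorname{ran}(I-P(\phi))$.
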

\begin{proof}
For a detailed proof see Appendix A.
\end{proof}

Subtracting the terms $v_2$ and $\gamma_m^2w_2$ shows that the classical momentum conservation $\gamma_m^2(w_2-w_1) = -(v_2 - v_1)$ while the classical energy is conserved due to the quadratic property of $\Phi$, see \eqref{l:DRFT24}. Equation \eqref{l:EQMOT9} has some alternative representations as follows.
\begin{prop}
\label{l:THEOREM4}
Using Theorem \ref{l:THEOREM3} the elastic collision solution to the four equations in equation \eqref{l:ENERG1P} can also be represented as
\begin{align}
\label{l:EQMOT1P}
\begin{split}
\begin{pmatrix}
v_2 \\
w_2
\end{pmatrix}
&=
\begin{pmatrix}
v_1 \\
w_1
\end{pmatrix}
+
\begin{pmatrix}
\gamma_m \sin(\theta)P(\phi)(w_1-v_1) \\
- \frac{1}{\gamma_m}\sin(\theta)P(\phi)(w_1-v_1)
\end{pmatrix},
\end{split}
\end{align}
with $\phi$ showing the (random, unit-size) connection between the particles and $\gamma_m^2=m/M$,
$\sin\left(\theta\right) = 2\gamma_m/\left(1+\gamma_m^2\right)$,
$\cos\left(\theta\right)=\left(1-\gamma_m^2\right)/\left(1+\gamma_m^2\right)$.
As a result
\begin{align}
\label{l:EQMOT3}
\begin{split}
\begin{pmatrix}
E[v_2 \vert x] \\
E[w_2\vert x]
\end{pmatrix}
\approx
\begin{pmatrix}
E[v_1\vert x] \\
E[w_1\vert x]
\end{pmatrix}
+
\begin{pmatrix}
\gamma_m \sin\left(\theta\right) (E[w_1\vert x]-E[v_1\vert x]) \\
- \frac{1}{\gamma_m}\sin\left(\theta\right)(E[w_1\vert x]-E[v_1\vert x])
\end{pmatrix}
\approx
\begin{pmatrix}
E[v_1\vert x] \\
E[w_1\vert x]
\end{pmatrix},
\end{split}
\end{align}
if $E[P(\phi)]\rightarrow I$ and $E[w_1\vert x] \approx E[v_1\vert x]$ for all $x$.
\begin{proof}
For a detailed proof see Appendix B. The first equation is just a rewrite of the second term in \eqref{l:EQMOT9} while the second term is created from the first term using the definition of $\Phi$.
\end{proof}
\end{prop}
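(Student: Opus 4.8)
The plan is to obtain \eqref{l:EQMOT1P} by simply expanding the block-matrix form on the last line of \eqref{l:EQMOT9}, and then to read off \eqref{l:EQMOT3} by taking the conditional expectation $E[\,\cdot\,\vert x]$ and invoking the two stated approximations. No part of the momentum/energy conservation \eqref{l:ENERG1P} has to be revisited: it is already built into Theorem \ref{l:THEOREM3}, so \eqref{l:EQMOT1P} is genuinely just a rewrite.

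First I would carry out the matrix multiplication in the last member of \eqref{l:EQMOT9} row by row. The top row gives $v_2 = \bigl(I-\gamma_m\sin(\theta)P(\phi)\bigr)v_1 + \gamma_m\sin(\theta)P(\phi)w_1 = v_1 + \gamma_m\sin(\theta)P(\phi)(w_1-v_1)$, and the bottom row gives $w_2 = \tfrac{\sin(\theta)}{\gamma_m}P(\phi)v_1 + \bigl(I-\tfrac{\sin(\theta)}{\gamma_m}P(\phi)\bigr)w_1 = w_1 - \tfrac{\sin(\theta)}{\gamma_m}P(\phi)(w_1-v_1)$, which is precisely \eqref{l:EQMOT1P}. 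As a consistency check one can instead start from the first member of \eqref{l:EQMOT9}: adding $\gamma_m\Phi = \gamma_m\sin(\theta)(v_1-w_1) - \gamma_m\sin(\theta)P(\phi)(v_1-w_1)$ to $\cos(\theta)v_1 + \gamma_m\sin(\theta)w_1$ and using the elementary identities $\cos(\theta) = 1-\gamma_m\sin(\theta)$ and $-\cos(\theta)+\tfrac{\sin(\theta)}{\gamma_m} = 1$ (both immediate from $\sin(\theta)=2\gamma_m/(1+\gamma_m^2)$, $\cos(\theta)=(1-\gamma_m^2)/(1+\gamma_m^2)$) reproduces the same two lines.

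Second, for \eqref{l:EQMOT3} I would apply $E[\,\cdot\,\vert x]$ to both sides of \eqref{l:EQMOT1P}. Treating the collision direction $\phi$ as independent of the incoming velocities and of $x$, the cross term factors, $E[P(\phi)(w_1-v_1)\vert x] = E[P(\phi)]\,\bigl(E[w_1\vert x]-E[v_1\vert x]\bigr)$; inserting the hypothesis $E[P(\phi)]\to I$ yields the middle member of \eqref{l:EQMOT3}, and the further hypothesis $E[w_1\vert x]\approx E[v_1\vert x]$ annihilates the correction terms, giving the right-hand member.

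The step I expect to be the real obstacle — everything else being the routine algebra above, spelled out in Appendix B — is the legitimacy of the factorization $E[P(\phi)(w_1-v_1)\vert x]\approx E[P(\phi)]\,E[(w_1-v_1)\vert x]$ together with the normalization $E[P(\phi)]\to I$. For $\phi$ uniform on the unit sphere one has only $E[P(\phi)]=\tfrac{1}{3}I$, so the statement ``$E[P(\phi)]\to I$'' cannot be an identity; it must be read as a modelling assumption on the distribution of the collision axis (anisotropy, concentration, or a suitably rescaled limit), and the cleanest course is to state it explicitly as the hypothesis under which \eqref{l:EQMOT3} holds, exactly as the Proposition already does.
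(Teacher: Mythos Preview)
Your proposal is correct and follows essentially the same approach as the paper: Appendix B also derives \eqref{l:EQMOT1P} by algebraic rearrangement (the paper starts from the projection decomposition and the one-dimensional collision formula, using $\cos(\theta)-1=-\gamma_m\sin(\theta)$ and $\cos(\theta)+1=\sin(\theta)/\gamma_m$, which is equivalent to your row-by-row expansion of the block-matrix form), and then obtains \eqref{l:EQMOT3} by taking conditional expectations under the stated assumptions $E[P(\phi)]\to I$ and $E[w_1\vert x]\approx E[v_1\vert x]$. Your closing remark that $E[P(\phi)]=\tfrac{1}{3}I$ for isotropic $\phi$, so that $E[P(\phi)]\to I$ must be read as a modelling hypothesis rather than an identity, is a valid observation that the paper does not address.
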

If the average motion of the main particle $v_2,v_1$ is not as large as $w_2,w_1$ the main particle $M$ speeds up and the incident particle slows down otherwise the reverse happens.

\section{Classical Energy/Momentum with Nelson's Measure}
The simple equation \eqref{l:EQMOT9} can be solved easily once $P(\phi)$ is established, but it is possible to find an alternative representation. The following Theorem is the main result of this paper describing how the classical energy and momentum conservation for the classical energy of the main and incident particles can be expressed exclusively in terms of $v_2$ and $v_1$. Notice that this classical energy constraint ignores any terms depending on the random projection $P(\phi)$ and therefore ignores the random factor $\Phi$ in equation \eqref{l:EQMOT9}.

\begin{thm}
\label{l:THEOREM5}
Let the elastic collision momentum and energy defined in equations \eqref{l:ENERG1P} hold so that the pre-collision momentum equals the post-collision momentum $\mathcal{P}_2=\mathcal{P}_1$ and the pre-collision energy equals the post-collision energy $\h_1=\h_2$. It then follows that the classical conserved energy of both the main particle and the incident particle equals the sum of both particles classical energy expressed as
\begin{align}
\begin{split}
\label{l:DRFT4}
\frac{2\h_2}{M} & = \frac{2\h_1}{M}
=\abs{\frac{v_2+v_1}{2}}^2+
\abs{\frac{v_2-v_1}{2}}^2 + \gamma_m^2\left(\abs{\frac{w_2+w_1}{2}}^2+
\abs{\frac{w_2-w_1}{2}}^2\right),
\end{split}
\end{align}
where $\gamma_m^2 = m/M$. Here the main and incident energies are expressed as a sum of squares of symmetric and anti-symmetric (osmotic) summations.

\begin{proof}
See Appendix C for a detailed proof.
\end{proof}
\end{thm}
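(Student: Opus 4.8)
The plan is to reduce \eqref{l:DRFT4} to the elementary parallelogram identity together with the energy hypothesis $\h_1=\h_2$; momentum conservation will play no direct role in the identity itself. First I would record that dividing \eqref{l:ENERG1c} and \eqref{l:ENERG1d} by $M$ and writing $\gamma_m^2=m/M$ gives
\[
\frac{2\h_1}{M}=\abs{v_2}^2+\gamma_m^2\abs{w_2}^2,\qquad
\frac{2\h_2}{M}=\abs{v_1}^2+\gamma_m^2\abs{w_1}^2,
\]
so that the leftmost equality in \eqref{l:DRFT4} is simply the restatement $\h_1=\h_2$ of energy conservation.

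Next I would apply the polarization (parallelogram) identity in $\Re^3$ separately to the pairs $v_1,v_2$ and $w_1,w_2$,
\[
\abs{\tfrac{v_2+v_1}{2}}^2+\abs{\tfrac{v_2-v_1}{2}}^2=\tfrac12\bigl(\abs{v_1}^2+\abs{v_2}^2\bigr),
\qquad
\abs{\tfrac{w_2+w_1}{2}}^2+\abs{\tfrac{w_2-w_1}{2}}^2=\tfrac12\bigl(\abs{w_1}^2+\abs{w_2}^2\bigr),
\]
and substitute these into the right-hand side of \eqref{l:DRFT4}. This turns the right-hand side into $\tfrac12\bigl(\abs{v_2}^2+\gamma_m^2\abs{w_2}^2\bigr)+\tfrac12\bigl(\abs{v_1}^2+\gamma_m^2\abs{w_1}^2\bigr)=\tfrac12(2\h_1/M)+\tfrac12(2\h_2/M)$, and a final use of $\h_1=\h_2$ collapses this average to the common value $2\h_1/M=2\h_2/M$, completing the identity.

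As a consistency cross-check — and probably the route taken in Appendix C — I would also derive \eqref{l:DRFT4} directly from the explicit collision solution \eqref{l:EQMOT9}: substitute $v_2$ and $w_2$, expand the squares of the symmetric and anti-symmetric combinations, and simplify using the projection relations $P(\phi)^2=P(\phi)$ and $(I-P(\phi))P(\phi)=0$ together with $\sin^2(\theta)+\cos^2(\theta)=1$ and $\sin(\theta)=2\gamma_m/(1+\gamma_m^2)$; the cross terms containing $\Phi$ drop out because of the orthogonality $\Phi^T(v_2-v_1)=0$ recorded after Theorem \ref{l:THEOREM3}, which is exactly the "quadratic property of $\Phi$" that preserves the classical energy. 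I do not expect a genuine obstacle here: the only point needing care is bookkeeping of the mass weights, making sure the factor $1$ stays with the main-particle symmetric/osmotic pair $\tfrac12(v_2\pm v_1)$ and the factor $\gamma_m^2$ with the incident-particle pair $\tfrac12(w_2\pm w_1)$, so that the result lines up with the forward/osmotic form of the Edward Nelson energy that the rest of the paper builds on.
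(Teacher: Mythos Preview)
Your argument is correct, and it is considerably shorter than the paper's. The identity \eqref{l:DRFT4} is indeed an immediate consequence of the parallelogram law $\abs{\tfrac{a+b}{2}}^2+\abs{\tfrac{a-b}{2}}^2=\tfrac12(\abs{a}^2+\abs{b}^2)$ applied to each particle, after which the right-hand side becomes the arithmetic mean of $2\h_1/M$ and $2\h_2/M$, and energy conservation collapses the mean to the common value. Your observation that momentum conservation plays no role is exactly right.

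The paper takes the longer route you anticipated as a ``consistency cross-check'': Appendix~C substitutes the explicit collision solution \eqref{l:EQMOT9}, introduces half-angle variables and auxiliary vectors $A_p,B_p$, and tracks the $\Phi$ terms through each of the four squares, using $\Phi^T(v_2-v_1)=0$ to cancel them. What that computation buys, and what your parallelogram argument cannot see, is the separate formulas \eqref{APP30} and \eqref{APP31} for the main-particle block $E_1=\abs{\tfrac{v_2+v_1}{2}}^2+\abs{\tfrac{v_2-v_1}{2}}^2$ and the incident-particle block $E_2$ individually; these contain residual $\Phi^TB_p$ cross terms that cancel only in the sum $E_1+E_2$. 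The paper later leans on this in the Conclusions to argue that neither block by itself equals the corresponding particle's kinetic energy. So your proof establishes the theorem cleanly, while the paper's proof is really doing double duty as a structural decomposition used elsewhere.
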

As mention before this equation can be altered with solutions of the classical conserved momentum in \eqref{l:EQMOT9}. An amount of osmotic energy contribution $\alpha \abs{v_2-v_1}^2$ in \eqref{l:DRFT4} can be replaced by $\alpha\gamma_m^2\abs{w_2-w_1}^2$ and imply the same energy constraint except for parameters in the osmotic energy terms. Hence the solution obtained is not unique as the energy and momenta can be exchanged and added accordingly.

It is straightforward to show that equation \eqref{l:DRFT4} reduces to the Nelson stochastic mechanics potential by assuming that the second terms in $w_1, w_2$ can be replaced by a non-random function so that the energy can be represented by an expectation. Assume the potential $V_{incident}=V_{incident}(x,t)$ equal to
\begin{align}
\label{APP32}
\begin{split}
V_{incident}=V_{incident}(x,t) & \approx \frac{M\gamma_m^2}{2}E\bigg[\abs{\frac{w_2+w_1}{2}}^2 +\abs{\frac{w_2-w_1}{2}}^2\bigg|x(t)=x\bigg]
\\
& = mE\bigg[\abs{\frac{w_2+w_1}{2}}^2
+\abs{\frac{w_2-w_1}{2}}^2\bigg|x(t)=x\bigg],
\end{split}
\end{align}
then the incident particle energy is a non-random expectation. This is an approximation as the expectation does not reflect the forward or backward drift and ignores any drift terms. If the mass ratio is very small so that $m\ll M$ then the osmotic energy term and the expectation becomes very small so that the remaining energy term looks more non-random. This equation is therefore an approximation of the Hamiltonian $\h_2, \h_2$ so that
\begin{align}
\label{l:DRFT5}
\begin{split}
\widetilde{\h}_2 & = \widetilde{\h}_1
= \frac{M}{2}\left(\abs{\frac{v_2+v_1}{2}}^2+
\abs{\frac{v_2-v_1}{2}}^2\right) + V_{incident}.
\end{split}
\end{align}

For a time-independent solution to \eqref{l:DRFT5} assume that $v_1, v_2$ are represented by an independent process and assume that the incoming classical energy equals the outgoing classical energy so that $\widetilde{\h}_1=\widetilde{\h}_2$. Then the expectation over the energy needs to be determined over a set of diffusion processes so that
\begin{align}
\label{l:DRFT6}
\begin{split}
E[\widetilde{\h}_2] = E[\widetilde{\h}_1]=
&E\left[\frac{M}{2}\abs{\frac{v_2+v_1}{2}}^2+
\abs{\frac{v_2-v_1}{2}}^2 + V_{incident}\right]
\\
=&\frac{M}{2}E\left[\abs{\frac{v_2+v_1}{2}}^2+
\abs{\frac{v_2-v_1}{2}}^2\right] + \int \rho(x,t)V_{incident}(x,t)dx,
\end{split}
\end{align}
then the average classical energy over this expectation must be $\widetilde{\h}_1=\widetilde{\h}_2$. Following Nelson, the easiest way of solving this problem is to demand that $x(t)$ satisfies a diffusion process and calculate the average velocities $v_2 \approx x(t+\tau_2) - x(t), v_1 \approx x(t) - x(t-\tau_1)$ on the energy constraint \eqref{l:DRFT6}. The following Theorem calculates the energy constraint under these conditions.
\begin{thm}
\label{l:THEOREM6}
(Edward Nelson) Let the three-dimensional position density $\rho=\rho(x,t)$ of the main particle $x(t)$ depend on the forward drift $b^+=b^+(x,t)$ and backward drift $b^-=b^-(x,t)$ satisfying
\begin{subequations}
\begin{align}
\label{l:DRFT2aa}
v_2\tau_2 = x(t+\tau_2)-x(t)&=b^+\tau_2+\sigma d^+z(t),
\\
\label{l:DRFT2ab}
\begin{split}
v_1\tau_1 = x(t)-x(t-\tau_1) & = b^-\tau_1+\sigma d^-z(t),
\\
b^-&=b^+- \sigma^2\nabla_x\log{\rho},
\end{split}
\end{align}
where $d^+z(t)$ and $d^-z$(t) are independent Gaussian increments and $\sigma$ is the main particle position variance. The probability density is defined as the solution of
\begin{align}
\label{l:DRFT3aa}
\frac{\partial \rho}{\partial t}&= - \nabla_x^T\left(b^+\rho\right)
+\frac{\sigma^2}{2}\Delta\rho,
\\
\label{l:DRFT3ab}
\frac{\partial \rho}{\partial t}&= -\nabla_x^T\left(b^-\rho\right)
- \frac{\sigma^2}{2}\Delta\rho,
\end{align}
so that
\begin{align}
\label{l:DRFT3ac}
\frac{\partial \rho}{\partial t}= -
\nabla_x^T\left(\left(\frac{b^++b^-}{2}\right)\rho\right),
\end{align}
\end{subequations}
where $\nabla_x^T=\left(\frac{\partial}{\partial x_1},...,\frac{\partial}{\partial x_3}\right)$ and $\Delta=\sum_{i} \frac{\partial^2}{\partial x_i^2}$. Then using the potential $V_{incident}=V_{incident}(x)$ the expectation of the pre-collision energy $E[\widetilde{\h}_1]$ equals the post-collision average energy $E[\widetilde{\h}_2]$ so that
\begin{align}
\label{l:HAMILT1}
\begin{split}
E\left[\widetilde{\h}_1\right] & = E\left[\widetilde{\h}_2\right] =
\\
&\frac{M}{2}\left(E\left[\abs{\frac{b^++b^-}{2}}^2\right]
+\frac{\eta^2}{4M^2} E\left[\abs{\frac{1}{\rho}\nabla_x
\rho}^2\right]\right)
+E[V_{incident}(x)] +
\frac{3\eta}{\overline{\tau}},
\end{split}
\end{align}
where
\begin{align*}
\eta & = M\sigma^2,
\\
E[V_{incident}] & = \int \rho V_{incident}(x)dx.
\end{align*}
Here $E[.]$ shows the density of $\rho$ for state space x taking into account the forward $b^+$ and backward $b^-$ mean drifts and averages over time as $E[1/\tau]=2/\overline{\tau}$.
\end{thm}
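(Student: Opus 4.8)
The plan is to start from the expression \eqref{l:DRFT6} for $E[\widetilde{\h}_2]=E[\widetilde{\h}_1]$ and evaluate the three pieces in turn using the diffusion structure \eqref{l:DRFT2aa}--\eqref{l:DRFT3ac}. First I would compute the forward velocity term: since $v_2\tau_2 = b^+\tau_2 + \sigma d^+z(t)$ with $d^+z(t)$ an independent Gaussian increment of variance $\tau_2$, one has $v_2 = b^+ + \sigma d^+z/\tau_2$, so $E\bigl[\abs{v_2}^2\bigr] = E\bigl[\abs{b^+}^2\bigr] + \sigma^2 E[3/\tau_2]$ (the cross term vanishes by independence of the increment from the past-measurable drift, and $E[\abs{d^+z}^2]=3\tau_2$ in three dimensions). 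Likewise $E\bigl[\abs{v_1}^2\bigr] = E\bigl[\abs{b^-}^2\bigr] + \sigma^2 E[3/\tau_1]$. The mixed term $E[v_2^Tv_1]$ requires more care: $d^+z(t)$ and $d^-z(t)$ are independent of each other and of the drifts, so $E[v_2^Tv_1] = E[(b^+)^Tb^-]$.

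Next I would assemble the symmetric and osmotic combinations. Using $\abs{(v_2+v_1)/2}^2 + \abs{(v_2-v_1)/2}^2 = \tfrac12(\abs{v_2}^2+\abs{v_1}^2)$, the velocity contribution to $2E[\widetilde{\h}_2]/M$ collapses to $\tfrac12 E[\abs{b^+}^2+\abs{b^-}^2] + \tfrac{3\sigma^2}{2}E[1/\tau_1+1/\tau_2]$. Now I invoke the drift relation $b^- = b^+ - \sigma^2\nabla_x\log\rho$ from \eqref{l:DRFT2ab}. Writing $u := \tfrac{\sigma^2}{2}\nabla_x\log\rho$ for the osmotic velocity and $b := (b^++b^-)/2$ for the current velocity, we have $b^+ = b+u$, $b^- = b-u$, hence $\tfrac12(\abs{b^+}^2+\abs{b^-}^2) = \abs{b}^2 + \abs{u}^2$. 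Then $\abs{u}^2 = \tfrac{\sigma^4}{4}\abs{\nabla_x\log\rho}^2 = \tfrac{\sigma^4}{4}\abs{\tfrac1\rho\nabla_x\rho}^2$, and substituting $\eta = M\sigma^2$ turns $\tfrac{\sigma^4}{4}$ into $\tfrac{\eta^2}{4M^2}$, which matches the claimed osmotic term. The time-average convention $E[1/\tau_1] = E[1/\tau_2] = 1/\overline{\tau}$ (so that $E[1/\tau]=2/\overline\tau$ for the combined increment) turns $\tfrac{3\sigma^2}{2}\cdot\tfrac{2}{\overline\tau}$ into $\tfrac{3\sigma^2}{\overline\tau} = \tfrac{3\eta}{M\overline\tau}$; multiplying the whole identity by $M/2$ as in \eqref{l:DRFT6} recovers the stated $3\eta/\overline\tau$. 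The potential term is immediate: $E[V_{incident}(x)] = \int \rho(x,t)V_{incident}(x)\,dx$, exactly as in \eqref{l:DRFT6}.

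Finally I must justify that $E[\widetilde\h_1] = E[\widetilde\h_2]$, i.e.\ that this quantity is genuinely conserved in time; this is the point where \eqref{l:DRFT3aa}--\eqref{l:DRFT3ac} and the hypothesis that $V_{incident}$ is time-independent enter. The argument is the classical Nelson computation: differentiate $E[\widetilde\h(t)]$ in $t$, use the Fokker--Planck equations \eqref{l:DRFT3aa}--\eqref{l:DRFT3ab} (equivalently the continuity equation \eqref{l:DRFT3ac} for $\rho$ together with the defining relation between $b^\pm$), integrate by parts, and check that the osmotic, kinetic and potential rates cancel. Concretely one shows $\tfrac{d}{dt}E[\tfrac12\abs{b}^2 + \tfrac12\abs{u}^2 + \tfrac1M V_{incident}]=0$ using that $b$ and $u$ satisfy the coupled Nelson equations derived from the stationarity of the action; the $\partial_t V_{incident}=0$ assumption kills the one term that would otherwise survive.

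The main obstacle I expect is exactly this last step: getting the cancellation in $\tfrac{d}{dt}E[\widetilde\h]=0$ cleanly requires the Nelson--Newton equation for the current velocity (the analogue of $m\,a = -\nabla V$ in the stochastic setting), which has not been written out in the excerpt and must either be derived here from the variational principle implicit in \eqref{l:DRFT6} or taken as the content of "following Nelson." The velocity-decomposition bookkeeping in the first two paragraphs is routine Gaussian algebra; the genuine work is organizing the time-derivative cancellation and pinning down the boundary terms in the integrations by parts (which vanish under the implicit assumption that $\rho$ and its derivatives decay at spatial infinity).
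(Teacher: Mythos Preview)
Your first two paragraphs reproduce the paper's proof almost verbatim: substitute \eqref{l:DRFT2aa}--\eqref{l:DRFT2ab} into \eqref{l:DRFT6}, use independence of the Gaussian increments to kill the cross terms, and rewrite $\tfrac12(\abs{b^+}^2+\abs{b^-}^2)=\abs{(b^++b^-)/2}^2+\abs{(b^+-b^-)/2}^2$ together with $b^+-b^-=\sigma^2\nabla_x\log\rho$ to produce the osmotic term. The paper records exactly this computation as \eqref{l:HAMILT3}; your parallelogram-law shortcut is an equivalent reorganisation.

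Your third paragraph, however, misreads what is being asserted. The equality $E[\widetilde{\h}_1]=E[\widetilde{\h}_2]$ in this theorem is \emph{not} a time-conservation statement to be extracted from the Fokker--Planck dynamics; it is simply the expectation of the pointwise identity $\widetilde{\h}_1=\widetilde{\h}_2$ already recorded in \eqref{l:DRFT5}--\eqref{l:DRFT6}, which in turn is the classical elastic-collision conservation $\h_1=\h_2$ of Theorem~\ref{l:THEOREM5} passed through the potential approximation \eqref{APP32}. The subscripts $1,2$ index pre- and post-collision at a single collision event, not two different times along the diffusion. The Fokker--Planck equations \eqref{l:DRFT3aa}--\eqref{l:DRFT3ac} appear as hypotheses to fix the framework (in particular the relation between $b^+$, $b^-$ and $\rho$), not as tools for a cancellation argument in this proof. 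The genuine time-independence of $E[\widetilde{\h}]$ under the diffusion---the Nelson--Newton cancellation you sketch, which ultimately requires the Schr\"odinger equation---is the content of the \emph{next} theorem, Theorem~\ref{l:HAMILT12}, proved in Appendix~D. So the obstacle you flag is real but belongs to a different statement; for the present theorem your first two paragraphs already constitute a complete proof.
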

\begin{proof}
The terms that require an explanation are the expectation of the kinetic energy, the osmotic component and the resident constant variance at the end of equation \eqref{l:HAMILT1}. Using the expressions for the main particle backward $b^-$ and forward drift $b^+$ defined in \eqref{l:DRFT2aa}, \eqref{l:DRFT2ab} then the energy \eqref{l:DRFT6} becomes
\begin{align*}
E\left[\abs{\frac{v_2+v_1}{2}}^2+ \abs{\frac{v_2-v_1}{2}}^2\right]
= &E\left[\abs{\frac{b^++b^-}{2}+ \frac{1}{2}\sigma\frac{\Delta^+z}{\tau_2}+\frac{1}{2}
\sigma\frac{\Delta^-z}{\tau_1} }^2\right]
\\
& +E\left[\abs{\frac{b^+-b^-}{2}
+\frac{1}{2}\sigma\frac{\Delta^+z}{\tau_2}-\frac{1}{2}
\sigma\frac{\Delta^-z}{\tau_1} }^2\right],
\end{align*}
which can be simplified to
\begin{align}
\label{l:HAMILT3}
\begin{split}
E&\left[\abs{\frac{b^++b^-}{2}}^2\right]
+E\left[\abs{\frac{b^+-b^-}{2}}^2\right]
+\frac{3\sigma^2}{2}E\left[\frac{1}{\tau}\right]
\\
=&E\left[\abs{\frac{b^++b^-}{2}}^2\right]
+\frac{\sigma^4}{4} E\left[\abs{\frac{1}{\rho}\nabla_x
\rho}^2\right]
+\frac{3\sigma^2}{\overline{\tau}},
\end{split}
\end{align}
assuming that $E[1/\tau] = 2 / \overline{\tau}$ and using the definition for $\sin(\theta)=2\gamma_m/(1+\gamma_m^2)$ provided in Theroem \ref{l:THEOREM3}.

Now we have
\begin{align*}
E[\widetilde{\h}_1] & = E[\widetilde{\h}_2] =
\frac{M}{2}E\left(\abs{\frac{v_2+v_1}{2}}^2+
\abs{\frac{v_2-v_1}{2}}^2\right)
\\
&=\frac{M}{2}\left(E\left[\abs{\frac{b^++b^-}{2}}^2\right]
+\frac{\sigma^4}{4}E\left[\abs{\frac{1}{\rho}\nabla_x
\rho}^2\right]\right)
+\frac{3M\sigma^2}{\overline{\tau}}
\\
&=\frac{M}{2}\left(E\left[\abs{\frac{b^++b^-}{2}}^2\right]
+\frac{\eta^2}{4M^2} E\left[\abs{\frac{1}{\rho}\nabla_x
\rho}^2\right]\right)
+\frac{3\eta}{\overline{\tau}},
\end{align*}
which reconciles the main expected particle average velocity and osmotic terms.

The integral of the potential term is more straightforward as it depends only on the position and time of the stochastic process $x(t)$ so
\begin{align*}
E\left[V_{incident}\right]=\int \rho V_{incident}(x,t)dx,
\end{align*}
and the Theorem is proven.
\end{proof}

Theorem \ref{l:THEOREM6} refers to any classical energy and momentum conserving set of particles so it can be applied to all Hamiltonian collision problems, diffusion problems, quantum mechanics, electrodynamic problems and gravity. There is no reference here to a stochastic interpretation of quantum mechanics though it is tempting to formulate one. Following Nelson it is clear that making equation \eqref{l:HAMILT1} time-independent refers to the Schr\"{o}dinger wave function if the variance equals $\eta = \hbar$ which is possible since
\begin{align}
\label{APP29}
\sigma^2 = \frac{\eta}{M} = \frac{\hbar}{M}.
\end{align}
However, the statement here is an approximation of the real energy constraint in \eqref{l:DRFT4} and it applies to classical energy proven by the momentum equation \eqref{l:EQMOT9}. That makes the result applicable to all mechanical applications not just quantum mechanics.

Equation \eqref{l:HAMILT1} refers to the constant energy term in the energy function $3\eta/\overline{\tau}$ which is proportional to the particle variance $\sigma^2$ and inversely proportional to the inter-particle average collision time $\overline{\tau}$. This constant contribution is based on assuming that the inter-particle collisions happen where $\overline{\tau}$ is the average time between collisions and assumes that $\tau$ satisfies a two-dimensional Gamma distribution with a mean of $\overline{\tau}$. This modelling or estimated average inter-particle $\overline{\tau}$ may change as a function of time so that it is possible that the collision energy $E\left[\widetilde{\h}_1\right] = E\left[\widetilde{\h}_2\right] \thicksim 3\eta/\overline{\tau}$ increases or decreases as a function of collision times.

The following Theorem introduces the phase function $S=S(x,t)$ and amplitude $R=R(x,t)$ to create a more elegant form of the classical energy condition \eqref{l:HAMILT1}.
\begin{thm}
\label{l:HAMILT12}
(Edward Nelson) Assume that the main particle $v_2, v_1$ behaves like a diffusion processes with a forward drift $b^+=b^+(x,t)$, a backward drift $b^-=b^-(x,t)$ and a variance of $\sigma$. Then \eqref{l:HAMILT1} becomes time-independent if there exists a wave density $\psi(x,t)$ so that
\begin{align}
\label{l:DRFT23}
\psi &=\psi(x,t)=
\exp\left[\frac{M\left(R + iS\right)}{\eta}\right] = \exp\left[\frac{\left(R + iS\right)}{\sigma^2}\right],
\end{align}
where
\begin{subequations}
\begin{align}
\label{l:APP13a}
2\nabla_x R & = \sigma^2\frac{\nabla_x
\rho}{\rho} =b^+-b^-,
\\
\label{l:APP13b}
2\nabla_x S & = \left(b^++b^-\right),
\end{align}
and where the main particle satisfies Schr\"{o}dinger's equation
\begin{align}
\label{l:SCHROD5}
\begin{split}
i\eta\frac{\partial\psi}{\partial t}&=-\frac{\eta^2}{2M}\Delta_x\psi + V(x)\psi,
\\
\rho&=\,|\psi\,|^2.
\end{split}
\end{align}
with $\Delta_x =\frac{\partial^2}{\partial x^2_1}+...+\frac{\partial^2}{\partial x^2_3}$ and $\eta = M\sigma^2$. Then the expectation of the pre-collision energy $E[\widetilde{\h}_1]$ = $E[\widetilde{\h}_2]$ and the forward drift $b^+$ and backward drift $b^-$ in \eqref{l:DRFT2aa}, \eqref{l:DRFT2ab} can be written as
\begin{align*}
b^\pm & = \frac{\eta}{M}\left(\IM \pm \gamma_m\RE \right)\frac{\nabla_x\psi}{\psi}.
\end{align*}
The three-dimensional main and incident energy equation \eqref{l:HAMILT1} becomes
\end{subequations}
\begin{align}
\label{l:HAMILT44}
\begin{split}
E\left[\widetilde{\h}_1\right] & = E\left[\widetilde{\h}_2\right] = \frac{M}{2}
\left(E\abs{\nabla_x S}^2 + E\abs{\nabla_x R}^2\right)
+\int\rho(x,t) V(x)dx + \frac{3\eta}{\overline{\tau}}
\\ & =
\frac{\eta^2}{2M}\int \abs{\nabla_x
\psi}^2 dx + \int \abs{\psi}^2 V(x)dx + \frac{3\eta}{\overline{\tau}}.
\end{split}
\end{align}
\begin{proof}
Details on the derivation of the Schr\"{o}dinger equation \eqref{l:SCHROD5} is in Appendix D where it was demonstrated that the average classical energy \eqref{l:HAMILT1} becomes time-invariant as long as the wave function of the probability density satisfies \eqref{l:SCHROD5}. The original proof by Nelson relied on differential equations whereas Appendix D minimizes the energy potential where the classical energy expression was created by the momentum equation in Theorem \ref{l:THEOREM3}. Carlen~\cite{ECARL1} demonstrated that the stochastic differential equation with drifts \eqref{l:SCHROD5} admits many strong and weak solutions for Rellich class potentials.

Taking \eqref{l:APP13a}, \eqref{l:APP13b} and substituting that into \eqref{l:HAMILT1} generates
\begin{align*}
E[\widetilde{\h}_1]=E[\widetilde{\h}_2]
=&\frac{M}{2}E\left[\abs{\nabla_{x} S}^2 +\frac{\eta^2}{4M^2}\abs{\frac{\nabla_{x}\rho}{\rho}}^2\right]
+\frac{3M\sigma^2}{2\overline{\tau}}
\\
=& \frac{M}{2}\left(E\abs{\nabla_x S}^2 + E\abs{\nabla_{x} R}^2\right)
+\frac{3\eta}{\overline{\tau}},
\end{align*}
where $\tau$ has an appropriate gamma distribution and $E\left[1/\tau\right]=2/\overline{\tau}$ and $\eta=M\sigma^2$ showing the average interaction time between collisions. This demonstrates the non-potential part of \eqref{l:HAMILT44}.
\end{proof}
\end{thm}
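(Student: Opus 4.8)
The plan is to decompose the statement into three pieces that can be handled in turn: (i) the purely algebraic re-expression of the energy functional \eqref{l:HAMILT1} of Theorem~\ref{l:THEOREM6} in terms of the wave density $\psi$, yielding the two lines of \eqref{l:HAMILT44}; (ii) the drift identity $b^{\pm} = \frac{\eta}{M}\left(\IM\pm\gamma_m\RE\right)\nabla_x\psi/\psi$; and (iii) the dynamical claim that this energy is $t$-independent exactly when $\psi$ solves \eqref{l:SCHROD5}. Existence of such a $\psi$ is assumed, so only (i)--(iii) need argument; (i) and (ii) are short computations once the polar form \eqref{l:DRFT23} is available, while (iii) carries the real content, which is what Appendix D must supply.

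For (i) and (ii) I would start from \eqref{l:DRFT23} and compute $\nabla_x\psi = \sigma^{-2}\psi\left(\nabla_x R + i\nabla_x S\right)$, hence $\abs{\nabla_x\psi}^2 = \sigma^{-4}\rho\left(\abs{\nabla_x R}^2 + \abs{\nabla_x S}^2\right)$ using $\rho = \abs{\psi}^2 = e^{2R/\sigma^2}$. Integrating and multiplying by $\eta^2/(2M) = M\sigma^4/2$ turns $\frac{M}{2}\left(E\abs{\nabla_x S}^2 + E\abs{\nabla_x R}^2\right)$ into $\frac{\eta^2}{2M}\int\abs{\nabla_x\psi}^2\,dx$, while $\int\rho V\,dx = \int\abs{\psi}^2 V\,dx$; combined with \eqref{l:APP13a}--\eqref{l:APP13b} this is exactly the passage from \eqref{l:HAMILT1} to \eqref{l:HAMILT44} already sketched after the statement. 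For (ii), $\frac{\eta}{M}\nabla_x\psi/\psi = \sigma^2\nabla_x\log\psi = \nabla_x R + i\nabla_x S$, so $\frac{\eta}{M}\IM\left(\nabla_x\psi/\psi\right) = \nabla_x S$ and $\frac{\eta}{M}\RE\left(\nabla_x\psi/\psi\right) = \nabla_x R$; combining with \eqref{l:APP13a}--\eqref{l:APP13b} gives the identity, the factor $\gamma_m$ on the osmotic part being the mass-ratio weight inherited from the $\gamma_m^2$ osmotic block of \eqref{l:DRFT4} (an admissible rescaling of the osmotic term noted after Theorem~\ref{l:THEOREM5}).

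For (iii) I would first show that under the ansatz \eqref{l:DRFT23} the complex equation \eqref{l:SCHROD5} is equivalent to a real pair: dividing by $\eta\psi$ and separating real and imaginary parts (using $\eta/M = \sigma^2$ and $\sqrt{\rho} = e^{R/\sigma^2}$), the imaginary part gives the continuity equation $\partial_t\rho = -\nabla_x^T(\rho\nabla_x S)$, which is \eqref{l:DRFT3ac} via $\nabla_x S = \tfrac12(b^++b^-)$ and, together with the osmotic relation $b^- = b^+ - \sigma^2\nabla_x\log\rho$, reproduces the Fokker--Planck pair \eqref{l:DRFT3aa}--\eqref{l:DRFT3ab}; the real part is the Hamilton--Jacobi--Madelung equation
\[
-\,\partial_t S \;=\; \tfrac12\abs{\nabla_x S}^2 \;-\; \tfrac12\abs{\nabla_x R}^2 \;-\; \tfrac{\sigma^2}{2}\Delta_x R \;+\; \frac{V}{M},
\]
the last two terms being the quantum potential. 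Time-invariance of $E[\widetilde{\h}_1]$ then follows either directly --- differentiate $\frac{M}{2}\int\rho(\abs{\nabla_x S}^2 + \abs{\nabla_x R}^2)\,dx + \int\rho V\,dx$ in $t$, substitute $\partial_t\rho$, $\partial_t R = \tfrac{\sigma^2}{2}\partial_t\rho/\rho$ and $\partial_t S$ from the two real equations, and integrate by parts so the cross terms cancel --- or, more slickly, by noting that after one integration by parts this functional equals $\langle\psi,H\psi\rangle + 3\eta/\overline{\tau}$ with $H = -\frac{\eta^2}{2M}\Delta_x + V$ self-adjoint, whose value is conserved along the Schr\"odinger flow; Appendix~D's variational route (deriving \eqref{l:SCHROD5} as the stationarity condition of this functional under the continuity constraint) gives the same conclusion. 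Evaluating the conserved functional pre- and post-collision then yields $E[\widetilde{\h}_1] = E[\widetilde{\h}_2]$, and feeding \eqref{l:APP13a}--\eqref{l:APP13b} back in produces \eqref{l:HAMILT44}.

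The hard part will be step (iii): keeping the bookkeeping straight so the quantum-potential term $-\tfrac{\sigma^2}{2}\Delta_x R$ recombines correctly with the $\abs{\nabla_x R}^2$ term after integration by parts, and justifying that the boundary terms vanish --- this needs enough decay of $\psi$, i.e.\ the Rellich-class hypothesis on $V$ invoked via Carlen~\cite{ECARL1} in the statement. A secondary subtlety is the resident term $3\eta/\overline{\tau}$: the argument treats $\overline{\tau}$ as constant, so strict $t$-invariance holds only when the mean inter-collision time does not drift; otherwise \eqref{l:HAMILT44} holds instantaneously with an explicit $\overline{\tau}(t)$, as already flagged in the remark following Theorem~\ref{l:THEOREM6}.
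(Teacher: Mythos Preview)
Your proposal is correct and follows essentially the same route as the paper: Appendix~D carries out exactly your ``direct'' computation for (iii), differentiating the functional $\frac{M}{2}\int\rho\bigl(\abs{\nabla_x S}^2+\abs{\nabla_x R}^2\bigr)\,dx + \int\rho V\,dx$ in $t$, substituting the continuity equation for $\rho_t$ (and the induced $R_t$), integrating by parts term by term until the derivative collapses to an integral against the Hamilton--Jacobi--Madelung expression, and then verifying that this expression together with continuity is equivalent to \eqref{l:SCHROD5}; parts (i) and (ii) are handled by the same polar-form calculus you outline. Your alternative route via $\langle\psi,H\psi\rangle$ and self-adjointness of $H$ is a legitimate shortcut the paper does not take, and your reading of the $\gamma_m$ in the drift formula as an osmotic rescaling is a generous interpretation of what is more likely a typo (the definitions \eqref{l:APP13a}--\eqref{l:APP13b} give $b^\pm=\nabla_x S\pm\nabla_x R$ with coefficient~$1$).
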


\section{Double Quantum Mechanics}
The previous section approximated the $w_1, w_2$ term with a non-random potential to show the relation to quantum mechanics but now assume that the $w_1, w_2$ terms are also undetermined. The easiest way to incorporate the incoming incident particle behaviour $w_2, w_1$ is to assume that the incident particle is represented by a diffusion process like the main particle and apply the same approach as in the last section. The two-particle Hamiltonian describing the classical elastic collision in Theorem \ref{l:THEOREM5} has four terms where the first two represent the main particle energy and the third and fourth terms represent the energy of the incident particle. In this case the four classical energy terms can be approximated by the  Schr\"{o}dinger again once for the main particle and once for the incident particle.

As the treatment of the incident particle will be the same as in the last section we shall only present the result of the procedure. Let $\eta_1 = M\sigma_1^2, \eta_2 = M\sigma_2^2$ and let
\begin{align*}
\psi_1 &=\psi_1(x,t)=
\exp\left[\frac{M\left(R_1(x,t) + iS_1(x,t)\right)}{\eta_1}\right],
\\
\psi_2 &=\psi_2(y,t)=
\exp\left[\frac{m\left(R_2(y,t) + iS_2(y,t)\right)}{\eta_2}\right],
\end{align*}
then without angular momentum or spin the energy Hamiltonian \eqref{l:DRFT4} in Theorem \ref{l:THEOREM5} will be time-independent under the Schr\"{o}dinger equations satisfying
\begin{align*}
i\eta_1\frac{\partial \psi_1(x,t)}{\partial t} & = -\frac{\eta_1^2}{2M}\Delta_{x}\psi_1(x,t)\psi_1(x,t),
\\
i\eta_2\frac{\partial \psi_2(y,t)}{\partial t}&=-\frac{\eta_2^2}{2m}\Delta_{y}\psi_2(y,t)\psi_2(y,t),
\end{align*}
where $\Delta_x =\frac{\partial^2}{\partial x^2_1}+...+\frac{\partial^2}{\partial x^2_3}$ , $\Delta_y =\frac{\partial^2}{\partial y^2_1}+...+\frac{\partial^2}{\partial y^2_3}$.
Here $\rho_1(x,t)=\abs{\psi_1(x,t)}^2$ represents the main particle position and $\rho_2(y,t)=\abs{\psi_2(y,t)}^2$ represents the incident particle. In this case
\begin{align}
\label{l:APP20}
\begin{split}
\frac{\partial E[\h_1]}{\partial t} & =\frac{\partial E[\h_2]}{\partial t} =
\\
& \frac{M}{2}\int\rho_1 (\nabla_{x} S_1)^T\nabla_{x}
\begin{pmatrix}
2S_{1,t}+\abs{\nabla_{x} S_1}^2 -\abs{\nabla_{x} R_1}^2 - \frac{\eta_1}{M}\Delta_{x} R_1
\end{pmatrix}
dx
\\
& +\frac{m}{2}\int\rho_2 (\nabla_{y} S_2)^T\nabla_{y}
\begin{pmatrix}
2S_{2,t}+\abs{\nabla_{y} S_2}^2 -\abs{\nabla_{y} R_2}^2 - \frac{\eta_2}{m}\Delta_{y} R_2
\end{pmatrix}
dy,
\end{split}
\end{align}
so that equation \eqref{l:HAMILT44} for the main and incident particle becomes
\begin{align}
\label{APP35}
\begin{split}
E\left[\h_1\right] = & E\left[\h_2\right] = \frac{M}{2}
\left(E\abs{\nabla_x S_1}^2 + E\abs{\nabla_x R_1}^2\right)
\\ & +
\frac{m}{2}
\left(E\abs{\nabla_y S_2}^2 + E\abs{\nabla_y R_2}^2\right)
+ \frac{3\eta_1}{\overline{\tau_1}}+\frac{3\eta_2}{\overline{\tau_2}}
\\ = &
\frac{\eta_1^2}{2M}\int \abs{\nabla_x
\psi_1}^2 dx +\frac{\eta_2^2}{2m}\int \abs{\nabla_y
\psi_2}^2 dy + \frac{3\eta_1}{\overline{\tau_1}}+\frac{3\eta_2}{\overline{\tau_2}},
\end{split}
\end{align}
showing how the energy changes due to the main and incident particle. Because the incident particle here is modelled separately, the combination of terms \eqref{APP35} must be constant in time. But notice the separate times $\tau_1, \tau_2$ and diffusions $\eta_1, \eta_2$ indicating the two stochastic processes.

It is possible to improve on the model \eqref{APP35} by inserting an input potential $V_1=V_1(x)$ and an output potential $V_2=V_2(y)$. Both particles are still independent and still use Schr\"odinger equation but now the potential added to the main particle provides energy while the potential for the incident particle also contributes energy. Because the processes are independent the energy contributions also separate and so equation \eqref{APP35} also holds for if there are independent potentials. In fact
\begin{align}
\label{APP50}
\begin{split}
E\left[\h_1\right] = & E\left[\h_2\right] = \frac{M}{2}
\left(E\abs{\nabla_x S_1}^2 + E\abs{\nabla_x R_1}^2\right)
+\int\rho_1(x,t) V_1(x)dx
\\ & +
\frac{m}{2}
\left(E\abs{\nabla_y S_2}^2 + E\abs{\nabla_y R_2}^2\right)
+\int\rho_2(y,t) V_2(y)dx + \frac{3\eta_1}{\overline{\tau_1}}+\frac{3\eta_2}{\overline{\tau_2}}
\\ = &
\frac{\eta_1^2}{2M}\int \abs{\nabla_x
\psi_1}^2 dx + \int \abs{\psi_1}^2 V_1(x)dx
\\ & +
\frac{\eta_2^2}{2m}\int \abs{\nabla_y
\psi_2}^2 dy + \int \abs{\psi_2}^2 V_2(y)dy + \frac{3\eta_1}{\overline{\tau_1}}+\frac{3\eta_2}{\overline{\tau_2}},
\end{split}
\end{align}
where the wave functions $\psi(x,t)$ and $\psi(y,t)$ satisfy the typical Schr\"odinger equations
\begin{align}
\label{APP51}
\begin{split}
i\eta_1\frac{\partial \psi_1(x,t)}{\partial t} & = -\frac{\eta_1^2}{2M}\Delta_{x}\psi_1(x,t)\psi_1(x,t) + V_1(x,t)\psi_1(x,t),
\\
i\eta_2\frac{\partial \psi_2(y,t)}{\partial t}&=-\frac{\eta_2^2}{2m}\Delta_{y}\psi_2(y,t)\psi_2(y,t) +V_2(y,t)\psi_2(y,t).
\end{split}
\end{align}

Obviously it is possible to create much more elaborate examples of particles that are correlated via momenta or energy levels. For instance, a correlation term between particles can be compensated for by energy dependencies for the incoming and outgoing particles modelled as free (zero-potential) particles. The diffusions $\eta_1, \eta_2$ do not have to be the same for the individual processes to make \eqref{APP50} to be time independent but it is easy to make \eqref{APP51} into Schr\"odinger equations by requiring that $\eta=\eta_1=\sigma_1^2M=\eta_2=\sigma_2^2m$ making all diffusions $\eta$ equal so that $\eta=\hbar$.

\section{Free Solutions and Correlation}
If the solution of Theorem \ref{l:THEOREM3} and Theorem \ref{l:THEOREM5} concerns classical mechanics it is clear that the collision process of the incident particle $m$ exchanges energy with the main particle $M$. This section addresses the many consequences of  by estimating the equilibrium solutions and determine the correlation between main and incident particles as a function of speed of the main particle. Interestingly, this correlation does not depend on the mass sizes assuming sufficiently small mass $m$.

First consider the case where there is no correlation between $v_1$ and $w_1$. Since
\begin{align*}
E[v_1^Tw_1]^2 \leq E\abs{v_1}^2E\abs{w_1}^2,
\end{align*}
the correlation can be defined as
\begin{align*}
\rho=\frac{E[v_1^Tw_1]}{\sqrt{E\abs{v_1}^2E\abs{w_1}^2}},
\end{align*}
and zero correlation means that $E[v_1^Tw_1]=0$.  The expectation of equation \eqref{APP15} for small mass ratio's $\gamma_m^2 \ll 1$ and $E[P(\phi)] \rightarrow I$ ($I$ unit matrix) becomes
\begin{align}
\begin{split}
\label{l:HAMILT8}
E[v_2] =
&
E[v_1] +
\gamma_m\sin{\theta}\left(E[w_1] - E[v_1]\right),
\\
=
&
E[v_1] + \frac{2\gamma_m^2}{(1+\gamma_m^2)}\left(E[w_1] -
E[v_1]\right)
\\
\approx
&
\left(1 - 2\gamma_m^2\right)E[v_1] + 2\gamma_m^2E[w_1],
\end{split}
\end{align}
so that
\begin{align}
\begin{split}
\label{l:HAMILT8}
E[v_2] - E[v_1] = 2\gamma_m^2( E[w_1] - E[v_1]),
\end{split}
\end{align}
which means the main particle $M$ experiences momentum acceleration or deceleration towards $E[w_1]$ as time increases.

Without correlation the energy calculation for $v_2$ becomes
\begin{align}
\label{APP19}
\begin{split}
E\abs{v_2}^2
= &
E\abs{v_1}^2 - 2\gamma_m\sin(\theta)E\abs{v_1}^2 +\gamma_m^2\sin^2(\theta) E\abs{v_1}^2
+\gamma_m^2\sin^2(\theta)E\abs{w_1}^2
\\
& +
2\gamma_m\sin(\theta)(1-\gamma_m\sin(\theta))E[v_1^Tw_1]
\\
= &
E\abs{v_1}^2 +\gamma_m\sin(\theta)\left(- 2 +\gamma_m\sin(\theta)\right) E\abs{v_1}^2
+\gamma_m^2\sin^2(\theta)E\abs{w_1}^2,
\end{split}
\end{align}
and then the denominator becomes
\begin{align*}
E\abs{v_2}^2 - E\abs{v_1}^2
& \approx
-4\gamma_m^2\left((1-\gamma_m^2)E\abs{v_1}^2
-\gamma_m^2E\abs{w_1}^2\right)
\\ & \approx
-4\gamma_m^2\left(E\abs{v_1}^2
-\gamma_m^2E\abs{w_1}^2\right).
\end{align*}
This equation and \eqref{l:HAMILT8} clearly show that
\begin{align*}
E\left[v_1\right]
& \approx
E\left[w_1\right],
\\
ME\abs{v_1}^2
& \approx
M\gamma_m^2c_w^2
=
mc_w^2,
\end{align*}
where $c_w^2 = E\abs{w_1}^2$. This means that for stationary processes the main particle assumes the incident particle speed and assumes the incident particle energy.

For the case that the main particle has a steady velocity a positive correlation becomes essential. Let $v_1,v_2$ and $w_1,w_2$ be three-dimensional then Theorem \ref{l:THEOREM3} shows that the after-collision speed for the main particle becomes
\begin{align}
\label{APP15}
v_2 =\left(I-\gamma_m\sin\left(\theta\right)P(\phi)\right)v_1+
\gamma_m\sin\left(\theta\right)P(\phi)w_1.
\end{align}
Assume that the energy is conserved in the incident particles during the particle collision then from the same equation \eqref{APP19} and a positive correlation  $E[v_1^Tw_1]$, it is clear that
\begin{align}
\label{APP26}
\begin{split}
0=E\abs{v_2}^2-E\abs{v_1}^2
= &- 2\gamma_m\sin(\theta)E\abs{v_1}^2 +\gamma_m^2\sin^2(\theta) E\abs{v_1}^2
+\gamma_m^2\sin^2(\theta)E\abs{w_1}^2
\\
& +
2\gamma_m\sin(\theta)(1-\gamma_m\sin(\theta))E[v_1^Tw_1],
\end{split}
\end{align}
hence if the incident vector always maintains the same energy there will be an effect on the main particle.

If the collision maintains the speed then $E\abs{v_2}^2 = E\abs{v_1}^2$ and $E[P(\phi)]\rightarrow I(\text{unit matrix})$ for the projection $P(\phi)$, then equation \eqref{APP26} reduces to
\begin{align*}
E[v_1^Tw_1]= &
\frac{
\left(2 -\gamma_m\sin(\theta)\right) E\abs{v_1}^2 - \gamma_m\sin(\theta)E\abs{w_1}^2
}
{
2\left(1-\gamma_m\sin(\theta)\right)
}
\approx
\frac{E\abs{v_1}^2-\gamma^2_m E\abs{w_1}^2}{1-\gamma_m^2},
\end{align*}
and the correlation becomes
\begin{align*}
\rho \approx \frac{\sqrt{E\abs{v_1}^2}}{\left(1-\gamma_m^2\right)\sqrt{E\abs{w_1}^2}}
-\frac{\gamma^2_m \sqrt{E\abs{w_1}^2}}{\left(1-\gamma_m^2\right)\sqrt{E\abs{v_1}^2}}.
\end{align*}

For small $\gamma_m^2 << 1$ this reduces to
\begin{align*}
\rho \approx \frac{\sqrt{E\abs{v_1}^2}}{\sqrt{E\abs{w_1}^2}} \approx
\frac{E\abs{v_1}}{E\abs{w_1}},
\end{align*}
assuming that $m \ll M$, $E\abs{v_1}^2 \approx E\abs{v_1}E\abs{v_1}$ and $E\abs{w_1}^2 \approx E\abs{w_1}E\abs{w_1}$. This last assumption removes any variance effects for the incident particle $m$ so for a very homogeneous medium the correlation will decrease further.

\section{The Minkowski transaction}
This section introduces classical elastic collisions for the main particle of mass $M$ and the incident particle $m$ where normally $m \ll M$. In this section a new method for finding the energy minimum is introduced using the Eigenvector representation of the collision matrix found in the momentum constraint \eqref{l:EQMOT9}. One Eigenvector for the collision matrix equals $1$ showing the amount of velocity maintained notwithstanding the collision. The other Eigenvector equals $-1$ showing the part of the motion that is explicitly changes sign upon collision.

First show the Eigenvectors for the collision matrix.
\begin{thm}
\label{l:THEOREM2}
Let the main and incident particles $v_1,v_2,w_1,w_2 \in \Re^3$ satisfy \eqref{l:ENERG1a}, \eqref{l:ENERG1b}, \eqref{l:ENERG1c}, \eqref{l:ENERG1d} and show classical momentum and energy conservation from the elastic collision in the form $\h_1(x)=\h_2(x)$ and $\mathcal{P}_1(x)=\mathcal{P}_2(x)$. Then equation \eqref{l:EQMOT9} can be solved as
\begin{align}
\label{l:EQMOT4}
\begin{split}
v_1 & =a-\gamma_m^2g,
\\
w_1 & =a+g,
\\
v_2 & =a+\gamma_m^2g^\bot,
\\
w_2 & =a-g^\bot,
\end{split}
\end{align}
where $g^\bot=g + \frac{1}{\gamma_m}\Phi$ and $\Phi$ is defined in Theorem \ref{l:THEOREM3} and $\abs{g}^2 = \abs{g^\bot}^2$. The vectors $(a,a)$ and $(-\gamma_m^2g,g)$ are Eigenvectors of the collision matrix with Eigenvalues $1,-1$. Here the vector $a$  becomes the average system velocity while $g$ and $g^\bot$ constitute the interactions between the main and incident particles.

\begin{proof}
The details of this Theorem can be found in Appendix E. Equation  $\eqref{l:EQMOT9}$ is linear and can be solved using Eigenvalues and Eigenvectors of the collision matrix. The energy conservation condition then specifies that $g$ and $g^\bot$ have identical absolute sizes.
\end{proof}
\end{thm}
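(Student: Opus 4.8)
The plan is to treat \eqref{l:EQMOT9} as an inhomogeneous linear relation whose collision matrix $C$ — the $2\times2$ matrix of \eqref{App14} and \eqref{l:EQMOT9}, acting componentwise on the two $\Re^3$ blocks — is diagonalizable with the distinct eigenvalues $\pm1$. First I would verify the eigenvector claim directly: substituting $\cos\theta=(1-\gamma_m^2)/(1+\gamma_m^2)$ and $\sin\theta=2\gamma_m/(1+\gamma_m^2)$, a one-line computation shows $(C-I)(a,a)^T=0$ and $(C+I)(-\gamma_m^2 g,g)^T=0$ for every $a,g\in\Re^3$. Since the two eigenvalues are distinct, $(a,a)$ and $(-\gamma_m^2 g,g)$ sweep out the eigenspaces for $\lambda=1$ and $\lambda=-1$ and together span $\Re^6$.

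Next I would resolve the incoming pair in this eigenbasis: there are unique $a,g\in\Re^3$ with $\binom{v_1}{w_1}=\binom{a}{a}+\binom{-\gamma_m^2 g}{g}$, and solving the $2\times2$ system yields $g=(w_1-v_1)/(1+\gamma_m^2)$ and $a=(v_1+\gamma_m^2 w_1)/(1+\gamma_m^2)$. Since $M(1+\gamma_m^2)a=Mv_1+mw_1=\mathcal{P}_1$, the vector $a$ is proportional to the conserved total momentum — which is why it behaves as the average system velocity — and this gives the first two lines of \eqref{l:EQMOT4}. Applying \eqref{l:EQMOT9} then simply fixes $(a,a)$, negates $(-\gamma_m^2 g,g)$, and adds the inhomogeneous vector $(\gamma_m\Phi,-\tfrac1{\gamma_m}\Phi)^T$; setting $g^\bot:=g+\tfrac1{\gamma_m}\Phi$, so that $\gamma_m^2 g+\gamma_m\Phi=\gamma_m^2 g^\bot$ and $-g-\tfrac1{\gamma_m}\Phi=-g^\bot$, one reads off $v_2=a+\gamma_m^2 g^\bot$ and $w_2=a-g^\bot$, the last two lines of \eqref{l:EQMOT4}.

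It remains to show $\abs{g}^2=\abs{g^\bot}^2$, and there are two routes. The conceptual one uses energy conservation: inserting the four formulas of \eqref{l:EQMOT4} into $2\h_1=M\abs{v_2}^2+m\abs{w_2}^2$ and $2\h_2=M\abs{v_1}^2+m\abs{w_1}^2$, the cross terms in $a$ cancel and the two sides collapse to $M(1+\gamma_m^2)(\abs{a}^2+\gamma_m^2\abs{g^\bot}^2)$ and $M(1+\gamma_m^2)(\abs{a}^2+\gamma_m^2\abs{g}^2)$, so $\h_1=\h_2$ forces the equality. The elementary route uses the explicit $\Phi=\sin\theta\,(I-P(\phi))(v_1-w_1)$ of Theorem \ref{l:THEOREM3}: since $v_1-w_1=-(1+\gamma_m^2)g$ and $\sin\theta\,(1+\gamma_m^2)=2\gamma_m$, we get $\Phi=-2\gamma_m(I-P(\phi))g$, hence $g^\bot=g-2(I-P(\phi))g=(2P(\phi)-I)g$; and $2P(\phi)-I$ is orthogonal, since $(2P(\phi)-I)^T(2P(\phi)-I)=I$ using $P(\phi)^2=P(\phi)$, so it preserves norms. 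I would lead with the energy argument and add the reflection remark, as the latter also explains geometrically why the explicit solution of Theorem \ref{l:THEOREM3} automatically respects energy conservation.

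I do not expect a genuine obstacle; the work is essentially bookkeeping. The two points needing care are (i) that the $2\times2$ collision matrix acts componentwise on the $\Re^3$ blocks, so its eigenvectors describe three-dimensional eigen-subspaces rather than lines, and (ii) tracking signs and the $\gamma_m$ versus $1/\gamma_m$ factors when passing from $\Phi$ to $g^\bot$ — precisely where an error would most easily slip in.
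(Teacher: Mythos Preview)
Your proposal is correct and follows essentially the same route as the paper's Appendix E: decompose $(v_1,w_1)$ in the eigenbasis of the collision matrix, push through \eqref{l:EQMOT9} to obtain $(v_2,w_2)$ in terms of $g^\bot=g+\tfrac{1}{\gamma_m}\Phi$, and then deduce $\abs{g}^2=\abs{g^\bot}^2$ from $\h_1=\h_2$ by the same cancellation of cross terms. Your second route --- rewriting $g^\bot=(2P(\phi)-I)g$ and observing that $2P(\phi)-I$ is an orthogonal reflection --- is a genuinely nice addition not present in the paper; it gives a geometric reason, independent of the energy bookkeeping, for why the explicit collision formula of Theorem~\ref{l:THEOREM3} preserves $\abs{g}$, and is worth keeping.
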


This example shows that some form of the Minkowski surface condition by equations \eqref{l:EQMOT9} using Theorem \ref{l:THEOREM2}. It is very interesting to see that the Theorem shows a stochastic change in the Minkowski equation.
\begin{thm}
\label{l:THEOREM10}
The requirements on Theorem \ref{l:THEOREM2} specifies the mean motion of the mean collision mean $a$ and the interactions $g=w_1-v_1$, $g^\bot=g+\frac{1}{\gamma_m}\Phi$ by manipulating equation \eqref{l:EQMOT4} to derive
\begin{align}
\label{APP23}
\begin{split}
g & = \frac{w_1-v_1}{(1+\gamma_m^2)},
g^\bot = \frac{v_2-w_2}{(1+\gamma_m^2)},
\\
a & = \frac{Mv_2+mw_2}{M+m} = \frac{Mv_1+mw_1}{M+m},
\end{split}
\end{align}
where $g^\bot=g + \frac{1}{\gamma_m}\Phi$, $\abs{g}^2 = \abs{g^\bot}^2$ and where the function $\Phi$ is defined in Theorem \ref{l:THEOREM3}.
Then
\begin{align}
\label{APP21}
\abs{w_2-a}^2 - \abs{v_2 - a}^2 = \abs{w_1-a}^2 - \abs{v_1 - a}^2,
\end{align}
so that the motion of all particles is constrained vis-a-vis the average velocity $a$. If the main particle on average does not lose or gain energy, if the average velocity $a$ randomizes against the behaviour of $g+g^\bot$ or if the statistical correlation between $a$ and $b+b^\bot$ is zero then
\begin{align}
\label{APP22}
E\abs{w_2}^2 - E\abs{v_2}^2 = E\abs{w_1}^2 - E\abs{v_1}^2.
\end{align}
For small $m$ where $m<<M$ it is clear that equation \eqref{APP21} is very close to the Minkowski metric in equation \eqref{APP22}.
\begin{proof}
The proof is provided in Appendix F.
\end{proof}
\end{thm}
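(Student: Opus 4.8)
\textit{Proof plan.} The plan is to build everything on the Eigenvector representation \eqref{l:EQMOT4} of Theorem \ref{l:THEOREM2}, i.e. $v_1=a-\gamma_m^2g$, $w_1=a+g$, $v_2=a+\gamma_m^2g^\bot$, $w_2=a-g^\bot$ with $\abs{g}^2=\abs{g^\bot}^2$. First I would read off \eqref{APP23} directly: subtracting the first two lines of \eqref{l:EQMOT4} gives $w_1-v_1=(1+\gamma_m^2)g$, subtracting the last two gives $v_2-w_2=(1+\gamma_m^2)g^\bot$, and forming $Mv_i+mw_i$ while using $M\gamma_m^2=m$ makes the $g$-terms cancel, so that $a=(Mv_1+mw_1)/(M+m)=(Mv_2+mw_2)/(M+m)$.

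Next I would establish \eqref{APP21} by substituting $w_1-a=g$, $v_1-a=-\gamma_m^2g$, $w_2-a=-g^\bot$, $v_2-a=\gamma_m^2g^\bot$ from \eqref{l:EQMOT4}: the left-hand side of \eqref{APP21} becomes $(1-\gamma_m^4)\abs{g^\bot}^2$ and the right-hand side $(1-\gamma_m^4)\abs{g}^2$, and these coincide by the energy-conservation constraint $\abs{g}^2=\abs{g^\bot}^2$ of Theorem \ref{l:THEOREM2}. This is the only place the energy equality is used.

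To get from the $a$-centered identity \eqref{APP21} to the bare relation \eqref{APP22}, I would expand each square, e.g. $\abs{w_2-a}^2-\abs{v_2-a}^2=\abs{w_2}^2-\abs{v_2}^2-2a^T(w_2-v_2)$ and likewise for the pre-collision pair, then use $w_2-v_2=-(1+\gamma_m^2)g^\bot$ and $w_1-v_1=(1+\gamma_m^2)g$. Equation \eqref{APP21} then becomes
\begin{align*}
\abs{w_2}^2-\abs{v_2}^2-\left(\abs{w_1}^2-\abs{v_1}^2\right)=-2(1+\gamma_m^2)\,a^T\left(g+g^\bot\right),
\end{align*}
and with $g^\bot=g+\tfrac1{\gamma_m}\Phi$, $\Phi=\sin(\theta)(I-P(\phi))(v_1-w_1)$, $\sin(\theta)=2\gamma_m/(1+\gamma_m^2)$ the interaction sum simplifies to $g+g^\bot=\tfrac{2}{1+\gamma_m^2}P(\phi)(w_1-v_1)$. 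Taking expectations, \eqref{APP22} follows precisely when $E\left[a^T(g+g^\bot)\right]=0$; the three hypotheses listed in the statement (no average energy change for the main particle, $a$ randomizing against $g+g^\bot$, zero correlation between $a$ and $g+g^\bot$) are all just restatements of this single vanishing.

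The hard part will be this last step: the vanishing of $E\left[a^T(g+g^\bot)\right]$ is a statistical hypothesis on the collision direction $\phi$ relative to the center-of-mass drift $a$, not a consequence of momentum or energy conservation, so \eqref{APP22} is genuinely a statistical corollary of the pointwise identity \eqref{APP21}. For the closing remark I would simply track orders: when $m\ll M$ one has $\gamma_m^2\ll1$ and $a=v_1+O(\gamma_m^2)=v_2+O(\gamma_m^2)$, so the correction between \eqref{APP21} and \eqref{APP22} is the mean-zero fluctuation $-2(1+\gamma_m^2)a^T(g+g^\bot)$ plus $O(\gamma_m^2)$ terms, which is the sense in which \eqref{APP21} is "very close to" the Minkowski-type metric \eqref{APP22}. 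The routine computations are then pushed to Appendix F.
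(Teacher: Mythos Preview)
Your proposal is correct and follows essentially the same route as the paper's Appendix F: derive $g$, $g^\bot$, $a$ from the Eigenvector decomposition \eqref{l:EQMOT4}, compute both sides of \eqref{APP21} as $(1-\gamma_m^4)\abs{g}^2=(1-\gamma_m^4)\abs{g^\bot}^2$, then expand the squares to isolate the cross-term $-2(1+\gamma_m^2)a^T(g+g^\bot)$ whose expected vanishing yields \eqref{APP22}. The explicit reduction $g+g^\bot=\tfrac{2}{1+\gamma_m^2}P(\phi)(w_1-v_1)$ you give is a small bonus the paper does not spell out, but otherwise the arguments coincide step for step.
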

It is interesting to see that equation \eqref{APP21} holds true under all circumstances and that Minkowski's metric in equation \eqref{APP22} depends on a statistical statement suggesting that $w_1,w_2$ are independent of $(v_1,v_2) \approx (a,a)$ while $w_1, w_2$ are statistically independent from $v_1,v_2$. As is well known, Einstein solved the Special Relativistic case for gravity in equation \eqref{APP22} by varying the time to compensate assuming that the incident particle speed always has the same speed setting $\abs{w_1}^2=c_w^2\tau_1^2, \abs{w_2}^2=c_w^2\tau_2^2$. Using the pre - and post collision vector of the macroscopic object speeds $v_1$ and $v_2$ then with $c_w^2$ as the constant speed of light equation \eqref{APP22} becomes
\begin{align*}
c_w^2\tau_2^2 - \abs{v_2}^2\tau_2^2 = c_w^2\tau_1^2- \abs{v_1}^2\tau_1^2.
\end{align*}
This requirement determines the behaviour of the times $\tau_1$ and $\tau_2$ by setting
\begin{align}
\label{APP24}
\frac{\tau_2}{\tau_1} = \frac{\sqrt{1-\frac{\abs{v_1}^2}{c_w^2}}}{\sqrt{1-\frac{\abs{v_2}^2}{c_w^2}}},
\end{align}
and then for a particle that starts at zero $v_1=0$ the time for the post-collision particle becomes
\begin{align*}
\tau_2=\frac{\tau_1}{\sqrt{(1 - \frac{\abs{v_2}^2}{c_w^2})}},
\end{align*}
showing how the time changes for an object as the result of speed $v_2$. This Special Relativistic expression will give rise to
\begin{align*}
M_2=\frac{M_1}{\sqrt{(1 - \frac{\abs{v_2}^2}{c_w^2})}},
\end{align*}
and it becomes clear that
\begin{align*}
E=\overline{M} c_w^2,
\end{align*}
where $\overline{M} = M_2-M_1$ (see Wheeler~\cite{Wheeler1} for a proof).

\section{Conclusions}

This paper shows that the Edward Nelson’s stochastic mechanics approach to quantum mechanics
applies to all physics and therefore has applications in many other fields like collision problems, gas dynamics, cosmology and relativity. The quantum mechanical approximation in section 2 folded the effect of the colliding particles w2, w1 into a non-random potential. Then stochastic mechanics pre-scribed using Ito processes to obtain the pre-collision and post-collision velocities but especially for repeated collisions there are other stochastic processes that model the velocity and position simultaneously. Clearly the model at the moment considers only the momentum and energy transfer but for a classical collision it is important to model the angular momentum in the collision. The angular momentum may live in harmony with the motion momentum of the particle and change the correlation discovered in section 4.

This paper shows that the Edward Nelson’s stochastic mechanics approach to quantum mechanics applies to all classical physics and therefore has applications in classical physics, collision problems, in cosmology and relativity. The quantum mechanical approximation in section 2 used Ito processes but especially for repeated collisions other processes are possible. The derivation of the Minkowski equation in Section 5 depends on the classical double particle elastic collision modelled with stochastic processes that can ultimately be interpreted as quantum mechanical representations. In cosmology Mordehai Milgrom argued with many others Paredes~\cite{PAREDES1}, Cresson~\cite{CRESSON1}, ~\cite{CRESSON2}, Chavanis~\cite{CHAVANIS1}, Chamaraux~\cite{CHAMARAUX1} for a change Newton's Law or a change in relativity. However the presence of diffusion in physical motion  as suggested in this paper add changes to Newtonian laws as the diffusion of the star system tends to push out stars.

Equation \eqref{l:DRFT4} is a very curious equation as it specifies the energy exactly the way that Nelson suggested in 1966 but there are no expectations around this equation because there is no stochastic process implied. In Nelson's approach the pre-collision and post-collision velocities were considered parts of the Ito equation describing the forward and backward drift of the main particle $v_1,v_2$. However, in equation \eqref{l:DRFT4} the velocities $w_1, w_2$ are associated with the incident particle and are not generated from a diffusion equation. The Nelson requirement on the pre-collisions $v_1, w_1$ and the post-collision velocities $v_2,w_2$ express the energy preserved exactly in a classical world. Equation \eqref{l:DRFT4} does not require a statistical expectation and is clearly based on a classical momentum equation hence this energy conservation expression applies to all classical physics.

In addition, the proof of the energy constraint \eqref{l:DRFT4} in Appendix C using stochastic mechanics shows that the main energy of the two-particle system is embedded in the first two terms of \eqref{l:DRFT4} while the terms $w_1, w_2$ on the right of this equation represent the incident energy. The four terms two-particle classical energy in Theorem \ref{l:THEOREM5} are weighted by mass $M$ to insure that the main and incident particles express the system energy. However, the proof shows that neither of these double terms represent the main or incident energy of the particle as there are additional terms in \eqref{APP30} and \eqref{APP31}. This is not surprising in an elastic collision exchanging energy but it runs counter to the typical Nelson interpretation.

Clearly, quantum mechanics can be derived from expression \eqref{l:DRFT4} by assuming that the second process is represented as a non-random potential while assuming that the main particle has forward and backward drifts corresponding to the collision process. The remaining difference then becomes the process diffusion and the choice of $\eta = \hbar$ completes the analogy. In this case the forward and backward drift of the stochastic equations (Ito processes) are identified as the pre-collision and post-collision velocities of the collision showing that the expected energy is time-independent on average. There are no other processes part of the collision so the energy of the process must be a constant throughout the elastic collision. Using the positions of the main and incident particle with diffusion processes the average energy is invariant to time if the position of the main and incident particle both satisfy Schr\"{o}dinger's equation with the same diffusion variant $\eta=\hbar=M\sigma^2$.

As section 2 shows the classical elastic energy constraint in equation \eqref{l:DRFT4} is not unique. This changes the solution in terms of parameter sizes but not its observed proportions hence the required stochastic solutions can change in scale but not in nature. Equation \eqref{l:DRFT4} relates to a classical collision but that suggests that many collisions occur over a period of time and the question is how many collisions does the main particle experience and how does the incident particle behave upon repeated collisions? In this case the incident particle may find $w_2,w_1$ in different phases or it may be possible that $w_2,w_1$ and $v_2,v_1$ move in a correlated fashion. This is certainly possible in a world where the angular momentum is considered to be part of the energy equation. Given the result it is also possible that a different representation than Schr\"odinger's equation depending on a different statistical distribution that also allows a representation of velocity.

In section 3 it is assumed that the incident particle also satisfies an independent process with a random distribution like the main particle. The energy equation now takes an average over two distributions associated withe main and incident particle. Equation \eqref{APP35} shows that the energies are added linearly and have no interaction at all if the process for the main particle and the incident particle are independent. This simplifies the resulting energy distribution because it shows that for independent particles the energy of individual particles can be added directly to the system energy distribution. That means that the average velocity terms and osmotic averages (random components) can all be added together to derive the energy with only statistical independence. So the conserved collision energy can be expressed as two mean velocities, two expected values showing the diffusion process and two potentials as shown in equation \eqref{l:HAMILT1}. This energy becomes time-independent if the main particle satisfies Schr\"{o}dinger's equation with a variance $\eta=M\sigma^2$.

The conserved four (weighted) particle classical energy constraint implies correlations for stationary cases which is shown in section 4. The equations suggest that the standard energy and momentum functions imply that there is a standard correlation between $v_1$ and $w_1$ depending on the speed with which the main particle moves. If the main particle is large compared to the incident one it is clear that the speed of the main particle $v_1$ has a correlation $\rho$ to $w_1$ proportional to the speed of $\rho=\abs{v_1}/\abs{w_1}$. In other words, if the main particle $M$ can speed through the medium of incident particles at constant speed the correlation between incident particles $w_1, w_2$ and the main particle $v_1,v_2$ is proportional to $\abs{v_1}/\abs{w_1}$.

In principle, quantum mechanics can be derived from assuming that $\eta=\hbar$, but that does not define the quantal process due to the terms inside the time-dependent Schr\"{o}dinger equation. There are other objections to this approach as Wallstrom~\cite{WALLSTROM1} argued that additional conditions on the angular momentum do not follow from the stochastic derivation. Though this is a clear requirement on the theory, it is also evident that Nelson does not deal with angular momentum constructions. However, the angular momentum can be included relatively easily into equation \eqref{l:HAMILT1} or Theorem \ref{l:THEOREM5} as either separate angular momentum (squared) or spin terms. In addition a momentum equation must be derived that specifies the behaviour of the angular momentum and its effect on motion. With these two models the angular momenta and spin will be specified and the Wallstrom objection will be countered.

In section 5 the paper solves the momentum equations of the momentum stochastic differential using the collision matrix Eigenvectors and Eigenvalues and adding the remainders. The first Eigenvector with Eigenvalue 1 equals the average velocity field $a$ covering all input and output velocities and the second Eigenvector with Eigenvalue -1 provides the individual velocity changes in terms of velocity differences $(v_1-w_1)$ and $(v_2-w_2)$. Using the Eigenvalue / Eigenvector method the paper demonstrates that the energy constraint can be used to derive the Minkowski metric with an extra term depending on correlation between the average velocity and the incoming / outgoing particles. If the correlation approaches zero then this expression changes into the Minkowski metric otherwise a set of correlations will enter this equation. Notice that this  argument shows that the Nelson stochastic derivative approach is directly correlated with classical physics and clearly has implications on the side of relativity.


\medskip
\medskip
\subsection*{Acknowledgment}
The author Herschel Rabitz acknowledges support from the Army Research Office (W911NF-19-1-0382).

\newpage
\bibliographystyle{plain}
\bibliography{Xbib}

\newpage
\renewcommand{\theequation}{A.\arabic{equation}}
\setcounter{equation}{0}
\appendix
\section*{Appendix A}
\label{AppendixA}
\begin{center}
{\bf Theorem \ref{l:THEOREM3}: Momentum Elastic Collision - The Collision Matrix}
\end{center}
\medskip
\noindent This Appendix proves the momentum equation for the classical elastic collision between main and incident particle preserving momentum and energy in Theorem \ref{l:THEOREM3}. The method uses a projection classical elastic collision $P(\phi)$ assuming that both particle are perfect spheres and that the collision conserves momentum and energy but ignores angular momentum. The easiest approach is to separate the motion of three dimensional velocities $v_1, w_1, v_2, w_2$ along $\phi$ with $P(\phi)v_1$, $P(\phi)w_1$, $P(\phi)v_2$ and $P(\phi)w_2$ and present its perpendicular motion using $(I-P(\phi))v_1$, $(I-P(\phi))w_1$, $(I-P(\phi))v_2$ and $(I-P(\phi))w_2$. Then the collision is described by
\begin{align}
\label{l:ENERG4}
\begin{split}
v_1 & = (\phi\phi^T)v_1 + (v_1 - (\phi\phi^T)v_1)=P(\phi)v_1+\left(I-P(\phi)\right)v_1,
\\
w_1 & = (\phi\phi^T)w_1 + (w_1 - (\phi\phi^T)w_1)=P(\phi)w_1+\left(I-P(\phi)\right)w_1,
\\
v_2  & = (\phi\phi^T)v_2 + (v_2 - (\phi\phi^T)v_2)=P(\phi)v_2+\left(I-P(\phi)\right)v_2,
\\
w_2  & = (\phi\phi^T)w_2 + (w_2 - (\phi\phi^T)w_2)=P(\phi)w_2+\left(I-P(\phi)\right)w_2,
\end{split}
\end{align}
and the total momentum before and after the collision can be written as
\begin{align}
\label{l:ENERG1}
\begin{split}
\mathcal{M}_1(x) & = Mv_1 + mw_1
\\
& = MP(\phi)v_1+M(1-P(\phi))v_1 +
mP(\phi)w_1+m(1-P(\phi))w_1,
\\
\mathcal{M}_2(x) & = Mv_2 + mw_2
\\
& = MP(\phi)v_2+M(1-P(\phi))v_2 +
mP(\phi)w_2+m(1-P(\phi))w_2.
\end{split}
\end{align}

The pre-collision energy $\h_{1}(x)$ is given by $M\abs{v_1}^2$ (multiplied by a factor of 2) which upon being decomposed using $P(\phi)$ and $(I-P(\phi))$ becomes
\begin{align}
\label{l:ENERG15}
\begin{split}
2\h_{1}(x) = & M\abs{v_1}^2 + m\abs{w_1}^2
\\
= & M\abs{P(\phi)v_1+(I-P(\phi))v_1}^2 +  m\abs{P(\phi)w_1+(I-P(\phi))w_1}^2
\\
= & Mv_1^TP(\phi)P(\phi)v_1 +Mv_1^T(I-P(\phi))(I-P(\phi))v_1
\\
& + mw_1^TP(\phi)P(\phi)w_1 +mw_1^T(I-P(\phi))(I-P(\phi))w_1,
\\
 = & Mv_1^TP(\phi)v_1 +Mv_1^T(I-P(\phi))v_1 +
 mw_1^TP(\phi)w_1 +mw_1^T(I-P(\phi))w_1,
\end{split}
\end{align}
because $P(\phi)$ and $(I-P(\phi))$ are orthogonal projections so that $P(\phi)P(\phi)=P(\phi)$,
$(I-P(\phi))(I-P(\phi))=(I-P(\phi))$ and $P(\phi))(I-P(\phi)) = 0$. Similarly using the same projections the post-collision energy $\h_2(x)$ becomes
\begin{align}
\label{l:ENERG2}
\begin{split}
2\h_{2}(x) & = M\abs{v_2}^2+m\abs{w_2}^2
\\
& = Mv_2^TP(\phi)v_2+Mv_2^T(1-P(\phi))v_2 +
mw_2^TP(\phi)w_2+mw_2^T(1-P(\phi))w_2.
\end{split}
\end{align}

Since the motions $v_1-(\phi\phi^T)v_1 = (1-P(\phi))v_1$, $w_1-(\phi\phi^T)w_1=(1-P(\phi))w_1$
are orthogonal to the collision surface this part of the motion must remain unchanged during the collision. Hence,
\begin{align}
\label{l:ENERG17}
\begin{split}
(1-P(\phi))v_2 & = (1-P(\phi))v_1,
\\
(1-P(\phi))w_2 & = (1-P(\phi))w_1,
\end{split}
\end{align}
while $\phi^Tv_1, \phi^Tw_1$ collide with each other along $\phi$ yielding $\phi^Tv_2, \phi^Tw_2$. Putting \eqref{l:ENERG17} back into $\eqref{l:ENERG4}$ results in
\begin{align*}
v_1 & = P(\phi)v_1 + (I - P(\phi))v_1,
\\
w_1 & = P(\phi)w_1 + (I - P(\phi))w_1,
\\
v_2  & = P(\phi)v_2 + (I - P(\phi))v_1,
\\
w_2  & = P(\phi)w_2 + (I - P(\phi))w_1,
\end{align*}
implying that \eqref{l:ENERG1},  \eqref{l:ENERG15} and \eqref{l:ENERG2} change to
\begin{align*}
\mathcal{M}_1(x) & = Mv_1 + mw_1
\\
& = MP(\phi)v_1+ mP(\phi)w_1 + M(1-P(\phi))v_1
+m(1-P(\phi))w_1,
\\
\mathcal{M}_2(x) & = Mv_2 + mw_2
\\
& = MP(\phi)v_2 +
mP(\phi)w_2 + M(1-P(\phi))v_1 + m(1-P(\phi))w_1,
\\
2\h_{1}(x) & = M\abs{v_1}^2+m\abs{w_1}^2
\\
& = Mv_1^TP(\phi)v_1 + mw_1^TP(\phi)w_1 +
Mv_1^T(1-P(\phi))v_1 + mw_1^T(1-P(\phi))w_1,
\\
2\h_{2}(x) & = M\abs{v_2}^2 + m\abs{w_2}^2
\\
& = Mv_2^TP(\phi)v_2 +
mw_2^TP(\phi)w_2 + Mv_1^T(1-P(\phi))v_1 +mw_1^T(1-P(\phi))w_1,
\end{align*}
and with $\mathcal{M}_1(x) = \mathcal{M}_2(x)$ and $\h_1(x)=\h_2(x)$ this expression becomes
\begin{align*}
\begin{split}
MP(\phi)v_1+ mP(\phi)w_1 & =
MP(\phi)v_2 + mP(\phi)w_2,
\\
Mv_1^TP(\phi)v_1 + mw_1^TP(\phi)w_1
 & = Mv_2^TP(\phi)v_2 + mw_2^TP(\phi)w_2,
\end{split}
\end{align*}
or using $P(\phi)P(\phi)=P(\phi)=\phi^T\phi$ this becomes
\begin{align}
\label{l:APP3}
\begin{split}
M\phi^Tv_1+ m\phi^Tw_1 & =
M\phi^Tv_2 + m\phi^Tw_2,
\\
M(\phi^Tv_1)^2 + m(\phi^Tw_1)^2
 & = M(\phi^Tv_2)^2 + m(\phi^Tw_2)^2.
\end{split}
\end{align}

This may be interpreted as a balance of the initial momenta $M(\phi^Tv_1)$, $m(\phi^Tw_1)$ and initial energies $M(\phi^Tv_1)^2$, $m(\phi^Tw_1)^2$ against the final momenta $M(\phi^Tv_2)$, $m(\phi^Tw_2)$ and final energies $M(\phi^Tv_2)^2$, $m(\phi^Tw_2)^2$ while the orthogonal motion does not change with the collision. These are a one-dimensional collision and the following proposition shows the form for $\phi^Tv_2, \phi^Tv_2$.

\begin{prop}
\label{l:PROP1} The solution to the two equations in \eqref{l:APP3} equals
\begin{align}
\label{l:EQMOT1}
\begin{split}
\begin{pmatrix}
\phi^Tv_2 \\
\phi^Tw_2
\end{pmatrix}
&= \begin{pmatrix}
\cos(\theta) & \gamma_m\sin(\theta) \\
\frac{\sin(\theta)}{\gamma_m} & -\cos(\theta)
\end{pmatrix}
\begin{pmatrix}
\phi^Tv_1\\
\phi^Tw_1
\end{pmatrix}
= \Gamma_\theta
\begin{pmatrix}
\phi^Tv_1\\
\phi^Tw_1
\end{pmatrix},
\end{split}
\end{align}
defining $\Gamma_\theta$ with $\gamma_m^2=m/M, \sin\left(\theta\right) =2\gamma_m/\left(1+\gamma_m^2\right)$, $\cos\left(\theta\right)=\left(1-\gamma_m^2\right)/\left(1+\gamma_m^2\right)$ as defined in Theorem \ref{l:THEOREM3} above. Notice that this equation does not reflect the angular momentum basically assuming that this motion is independent of the particle momenta. From that point of view this equation is an approximation. This equation establishes equation \eqref{App14} above.

\begin{proof}
From the conservation of the momentum using equations \eqref{l:APP3} it is clear that
\begin{align*}
MP(\phi)v_2- MP(\phi)v_1=mP(\phi)w_1- mP(\phi)w_2,
\end{align*}
so that
\begin{align*}
M\phi^Tv_2- M\phi^Tv_1=m\phi^Tw_1- m\phi^Tw_2.
\end{align*}
Dividing this by the mass $M$ yields
\begin{align}
\label{l:REL13}
\phi^T(v_2 - v_1) = - \gamma_m^2\phi^T(w_2 - w_1),
\end{align}
so that
\begin{align*}
\phi^Tv_2 + \gamma_m^2\phi^Tw_2 =\phi^Tv_1 + \gamma_m^2\phi^Tw_1,
\end{align*}
using the definition of the mass ratio $\gamma_m$ in the Theorem.

From the energy distribution \eqref{l:APP3} it is clear that
\begin{align*}
Mv_2^TP(\phi)v_2 + mw_2^TP(\phi)w_2
=Mv_1^TP(\phi)v_1 + mw_1^TP(\phi)w_1,
\end{align*}
implying that
\begin{align*}
M(\phi^Tv_2)^2+m(\phi^Tw_2)^2
=M(\phi^Tv_1)^2+m(\phi^Tw_1)^2,
\end{align*}
which is equivalent to
\begin{align*}
M((\phi^Tv_2)^2-(\phi^Tv_1)^2)
=-m((\phi^Tw_2)^2-(\phi^Tw_1)^2),
\end{align*}
or
\begin{align*}
M(\phi^Tv_2-\phi^Tv_1)(\phi^Tv_2+\phi^Tv_1)
=-m(\phi^Tw_2-\phi^Tw_1)(\phi^Tw_2+\phi^Tw_1).
\end{align*}
Dividing by $M$ then yields
\begin{align*}
(\phi^Tv_2-\phi^Tv_1)(\phi^Tv_2+\phi^Tv_1)
=-\gamma_m^2(\phi^Tw_2-\phi^Tw_1)(\phi^Tw_2+\phi^Tw_1).
\end{align*}

However, the term $\gamma_m^2\phi^T(w_2-w_1)$ given by \eqref{l:REL13} above and can be substituted into the righthand side such that
\begin{align*}
(\phi^Tv_2-\phi^Tv_1)(\phi^Tv_2+\phi^Tv_1)
=(\phi^Tv_2-\phi^Tv_1)(\phi^Tw_2+\phi^Tw_1),
\end{align*}
and
\begin{align*}
(\phi^Tv_2-\phi^Tv_1)\left((\phi^Tv_2+\phi^Tv_1) - (\phi^Tw_2+\phi^Tw_1)\right)=0,
\end{align*}
establishing
\begin{align}
\label{l:REL14}
\phi^Tv_2+\phi^Tv_1 = \phi^Tw_2+\phi^Tw_1,
\end{align}
because this is a one-dimensional collision with $\phi^Tv_2 \neq \phi^Tv_1$. If $\phi^Tv_2 = \phi^Tv_1$ then $P(\phi)^Tv_2 = P(\phi)^Tv_1$ which means that $v_2=v_1$ using \eqref{l:ENERG17} and therefore $w_2=w_1$ so there is no interaction.

Combining equations \eqref{l:REL13} and \eqref{l:REL14} it is clear that
\begin{align*}
\phi^Tv_2 + \gamma_m^2\phi^Tw_2 & =\phi^Tv_1 + \gamma_m^2\phi^Tw_1,
\\
\phi^Tv_2-\phi^Tw_2& =-\phi^Tv_1  +\phi^Tw_1,
\end{align*}
or in matrix form
\begin{align*}
 \begin{pmatrix}
1 & \gamma_m^2 \\
1 & -1
\end{pmatrix}
\begin{pmatrix}
\phi^Tv_2
\\
\phi^Tw_2
\end{pmatrix}
=
 \begin{pmatrix}
1 & \gamma_m^2 \\
-1 & 1
\end{pmatrix}
\begin{pmatrix}
\phi^Tv_1
\\
\phi^Tw_1
\end{pmatrix}.
\end{align*}
Inverting the first matrix and multiplying the matrices shows that
\begin{align*}
\begin{pmatrix}
\phi^Tv_2
\\
\phi^Tw_2
\end{pmatrix}
=&
 \begin{pmatrix}
\frac{1-\gamma_m^2}{1+\gamma_m^2} & \frac{2\gamma_m^2}{1+\gamma_m^2} \\
\frac{2}{1+\gamma_m^2} & -\frac{1-\gamma_m^2}{1+\gamma_m^2}
\end{pmatrix}
\begin{pmatrix}
\phi^Tv_1
\\
\phi^Tw_1
\end{pmatrix}
=
\begin{pmatrix}
\cos(\theta) & \gamma_m\sin(\theta) \\
\frac{\sin(\theta)}{\gamma_m} & -\cos(\theta)
\end{pmatrix}
\begin{pmatrix}
\phi^Tv_1
\\
\phi^Tw_1
\end{pmatrix},
\end{align*}
using the definitions from the Theorem \ref{l:THEOREM3}. This proves Proposition \eqref{l:PROP1}.
\end{proof}
\end{prop}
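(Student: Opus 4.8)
The plan is to read \eqref{l:APP3} as two scalar equations in the two scalar unknowns $\phi^Tv_2,\phi^Tw_2$, with $\phi^Tv_1,\phi^Tw_1$ regarded as data, and to collapse the quadratic (energy) equation into a linear one by exploiting the linear (momentum) equation. First I would divide both lines of \eqref{l:APP3} by $M$, write $\gamma_m^2=m/M$, and record the momentum relation in the form $\phi^Tv_2+\gamma_m^2\phi^Tw_2=\phi^Tv_1+\gamma_m^2\phi^Tw_1$, so that in particular $\phi^T(v_2-v_1)=-\gamma_m^2\,\phi^T(w_2-w_1)$.

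Next I would rewrite the energy line as a difference of squares, $\phi^T(v_2-v_1)\,\phi^T(v_2+v_1)=-\gamma_m^2\,\phi^T(w_2-w_1)\,\phi^T(w_2+w_1)$, and substitute the momentum relation into its right-hand side to pull out the common factor $\phi^T(v_2-v_1)$, obtaining $\phi^T(v_2-v_1)\big(\phi^T(v_2+v_1)-\phi^T(w_2+w_1)\big)=0$. At this point I would dispose of the non-interacting branch: if $\phi^Tv_2=\phi^Tv_1$ then $P(\phi)v_2=P(\phi)v_1$, and together with \eqref{l:ENERG17} this forces $v_2=v_1$, hence also $w_2=w_1$; so on the interacting branch I may assume $\phi^T(v_2+v_1)=\phi^T(w_2+w_1)$, equivalently $\phi^Tv_2-\phi^Tw_2=-\phi^Tv_1+\phi^Tw_1$.

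The residual system is now genuinely linear. Writing it as $A\begin{pmatrix}\phi^Tv_2\\\phi^Tw_2\end{pmatrix}=B\begin{pmatrix}\phi^Tv_1\\\phi^Tw_1\end{pmatrix}$ with $A=\begin{pmatrix}1&\gamma_m^2\\1&-1\end{pmatrix}$ and $B=\begin{pmatrix}1&\gamma_m^2\\-1&1\end{pmatrix}$, I note $\det A=-(1+\gamma_m^2)\neq0$, invert $A$, and multiply out $A^{-1}B=\frac{1}{1+\gamma_m^2}\begin{pmatrix}1-\gamma_m^2&2\gamma_m^2\\2&-(1-\gamma_m^2)\end{pmatrix}$. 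Finally I would do the trigonometric bookkeeping: with $\cos\theta=(1-\gamma_m^2)/(1+\gamma_m^2)$ and $\sin\theta=2\gamma_m/(1+\gamma_m^2)$ one has $2\gamma_m^2/(1+\gamma_m^2)=\gamma_m\sin\theta$ and $2/(1+\gamma_m^2)=\sin\theta/\gamma_m$, so $A^{-1}B=\Gamma_\theta$, and as a consistency check $\cos^2\theta+\sin^2\theta=1$. This yields \eqref{l:EQMOT1} and hence \eqref{App14}.

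I do not expect a real obstacle here: once the energy equation is linearized against the momentum equation the rest is a short deterministic computation. The only point demanding care is the cancellation of the factor $\phi^T(v_2-v_1)$, where one must explicitly discard the trivial branch $\phi^Tv_2=\phi^Tv_1$ (which corresponds to no collision) before dividing, and keeping track of signs when passing from the sum/difference relations to the matrices $A$ and $B$.
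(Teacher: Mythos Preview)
Your proposal is correct and follows essentially the same route as the paper's own proof: you reduce the energy equation to a linear relation via the difference-of-squares factorization, substitute the momentum identity to cancel the common factor $\phi^T(v_2-v_1)$ (discarding the trivial branch with the same appeal to \eqref{l:ENERG17}), and then solve the resulting $2\times 2$ linear system using exactly the matrices $A=\begin{pmatrix}1&\gamma_m^2\\1&-1\end{pmatrix}$ and $B=\begin{pmatrix}1&\gamma_m^2\\-1&1\end{pmatrix}$ that appear in the paper. Your only additions are the explicit mention of $\det A=-(1+\gamma_m^2)\neq 0$ and the $\cos^2\theta+\sin^2\theta=1$ sanity check, neither of which changes the argument.
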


It is relatively straightforward to extend this formulation to the 3-dimensional velocities $v_1$, $w_1$, $v_2$ and $w_2$ using the parts of the equation not involved in the collision. Multiply the equations \eqref{l:EQMOT1} with the vector $\phi$ to derive
\begin{align*}
\begin{pmatrix}
P(\phi)v_2 \\
P(\phi)w_2
\end{pmatrix}
&= \begin{pmatrix}
\cos(\theta) & \gamma_m\sin(\theta) \\
\frac{\sin(\theta)}{\gamma_m} & -\cos(\theta)
\end{pmatrix}
\begin{pmatrix}
P(\phi)v_1\\
P(\phi)w_1
\end{pmatrix}
=
\Gamma_\theta
\begin{pmatrix}
P(\phi)v_1\\
P(\phi)w_1
\end{pmatrix}.
\end{align*}
Then add $(1-P(\phi))v_2$, $(1-P(\phi))w_2$ on both sides and add the factors $(1-P(\phi))v_1-(1-P(\phi))v_1=0$ and $(1-P(\phi))w_1-(1-P(\phi))w_1=0$ under the vector yielding the following multi-dimensional version
\begin{align*}
\begin{pmatrix}
P(\phi)v_2 + (1-P(\phi))v_2\\
P(\phi)w_2 + (1-P(\phi))w_2
\end{pmatrix}
&=
\begin{pmatrix}
(1-P(\phi))v_2\\
(1-P(\phi))w_2
\end{pmatrix}
+
\\
& \begin{pmatrix}
\cos(\theta) & \gamma_m\sin(\theta) \\
\frac{\sin(\theta)}{\gamma_m} & -\cos(\theta)
\end{pmatrix}
\begin{pmatrix}
P(\phi)v_1+(1-P(\phi))v_1-(1-P(\phi))v_1\\
P(\phi)w_1+(1-P(\phi))w_1-(1-P(\phi))w_1
\end{pmatrix}.
\end{align*}
This can be simplified by $v_1=P(\phi)v_1+(1-P(\phi))v_1$, $w_1=P(\phi)w_1+(1-P(\phi))w_1$ and the matrix equality \eqref{l:ENERG17} to become
\begin{align}
\label{l:REL17}
\begin{split}
\begin{pmatrix}
v_2\\
w_2
\end{pmatrix}
& =
\begin{pmatrix}
(1-P(\phi))v_1\\
(1-P(\phi))w_1
\end{pmatrix}
+
\begin{pmatrix}
\cos(\theta) & \gamma_m\sin(\theta) \\
\frac{\sin(\theta)}{\gamma_m} & -\cos(\theta)
\end{pmatrix}
\begin{pmatrix}
v_1\\
w_1
\end{pmatrix}
-
\begin{pmatrix}
\cos(\theta) & \gamma_m\sin(\theta) \\
\frac{\sin(\theta)}{\gamma_m} & -\cos(\theta)
\end{pmatrix}
\begin{pmatrix}
(1-P(\phi))v_1\\
(1-P(\phi))w_1
\end{pmatrix}
\\
& =
\begin{pmatrix}
\cos(\theta) & \gamma_m\sin(\theta) \\
\frac{\sin(\theta)}{\gamma_m} & -\cos(\theta)
\end{pmatrix}
\begin{pmatrix}
v_1\\
w_1
\end{pmatrix}
+
\begin{pmatrix}
\gamma_m\sin(\theta) & -\gamma_m\sin(\theta)
\\
-\frac{\sin(\theta)}{\gamma_m} & \frac{\sin(\theta)}{\gamma_m}
\end{pmatrix}
\begin{pmatrix}
(1-P(\phi))v_1\\
(1-P(\phi))w_1
\end{pmatrix}
\\
& =
\begin{pmatrix}
\cos(\theta) & \gamma_m\sin(\theta) \\
\frac{\sin(\theta)}{\gamma_m} & -\cos(\theta)
\end{pmatrix}
\begin{pmatrix}
v_1\\
w_1
\end{pmatrix}
+
\begin{pmatrix}
\gamma_m\sin(\theta) (1-P(\phi))(v_1-w_1)\\
-\frac{\sin(\theta)}{\gamma_m} (1-P(\phi))(v_1-w_1)
\end{pmatrix}
\\
& =
\begin{pmatrix}
\cos(\theta) & \gamma_m\sin(\theta) \\
\frac{\sin(\theta)}{\gamma_m} & -\cos(\theta)
\end{pmatrix}
\begin{pmatrix}
v_1\\
w_1
\end{pmatrix}
+
\begin{pmatrix}
\gamma_m\Phi\\
-\frac{1}{\gamma_m}\Phi
\end{pmatrix},
\end{split}
\end{align}
where $\Phi = \sin(\theta) (1-P(\phi))(v_1-w_1)$. Notice that the terms
$(1-P(\phi))v_1$, $(1-P(\phi))w_1$ use $1-\cos(\theta)=\gamma_m\sin(\theta)$ as well as $1+\cos(\theta)=\sin(\theta)/\gamma_m$. From the third line of this statement we also have
\begin{align}
\label{l:DRFT20}
\begin{split}
\begin{pmatrix}
v_2\\
w_2
\end{pmatrix}
& =
\begin{pmatrix}
\cos(\theta) & \gamma_m\sin(\theta)
\\
\frac{\sin(\theta)}{\gamma_m} & -\cos(\theta)
\end{pmatrix}
\begin{pmatrix}
v_1\\
w_1
\end{pmatrix}
+
\begin{pmatrix}
\gamma_m\sin(\theta) (1-P(\phi))(v_1-w_1)
\\
-\frac{\sin(\theta)}{\gamma_m} (1-P(\phi))(v_1-w_1)
\end{pmatrix}
\\
& =
\begin{pmatrix}
\cos(\theta) + \gamma_m\sin(\theta) (1-P(\phi)) & \gamma_m\sin(\theta)-\gamma_m\sin(\theta) (1-P(\phi))
\\
\frac{\sin(\theta)}{\gamma_m} -\frac{\sin(\theta)}{\gamma_m} (1-P(\phi)) & -\cos(\theta)+\frac{\sin(\theta)}{\gamma_m} (1-P(\phi))
\end{pmatrix}
\begin{pmatrix}
v_1\\
w_1
\end{pmatrix}
\\
& =
\begin{pmatrix}
I - \gamma_m\sin(\theta)P(\phi) & \gamma_m\sin(\theta)P(\phi)
\\
\frac{\sin(\theta)}{\gamma_m}P(\phi) & I-\frac{\sin(\theta)}{\gamma_m}P(\phi)
\end{pmatrix}
\begin{pmatrix}
v_1\\
w_1
\end{pmatrix}.
\end{split}
\end{align}
Both equations \eqref{l:REL17} and \eqref{l:DRFT20} are represented in equation \eqref{l:EQMOT9}.

Notice that by definition $\Phi^T(v_2-v_1)=0$ as a result. This is easy to show using the definition for $v_2(t)$ in \eqref{l:REL17} and \eqref{l:DRFT20} and noticing that
\begin{align*}
v_2-v_1
& =(\cos(\theta)-1)v_1 + \gamma_m\sin(\theta)w_1+\gamma_m\Phi
\\
& =\gamma_m\sin(\theta)(w_1 -v_1)+\gamma_m\Phi.
\end{align*}
Now if $\Phi^T(v_2-v_1)=0$ then
\begin{align*}
0 =\gamma_m\sin(\theta)\Phi^T(w_1 -v_1)+\gamma_m\abs{\Phi}^2,
\end{align*}
with $\abs{\Phi}^2=\Phi^T\Phi$ and this follows because
\begin{align}
\label{l:DRFT24}
\begin{split}
\gamma_m \sin(\theta) & \Phi^T(w_1 -v_1)+\gamma_m\abs{\Phi}^2
\\
= & -\gamma_m\sin^2(\theta)(v_1 -w_1)^T(1-P(\phi))(v_1 -w_1)
\\
& +\gamma_m\sin^2(\theta)(v_1 -w_1)^T(1-P(\phi))(1-P(\phi))(v_1 -w_1)
\\
= & -\gamma_m\sin^2(\theta)(v_1 -w_1)^T(1-P(\phi))(v_1 -w_1)
\\
& +\gamma_m\sin^2(\theta)(v_1 -w_1)^T(1-P(\phi))(v_1 -w_1)
=0.
\end{split}
\end{align}
This establishes the proof for Theorem \ref{l:THEOREM3}.

\newpage
\renewcommand{\theequation}{B.\arabic{equation}}
\setcounter{equation}{0}
\section*{Appendix B}
\label{AppendixB}
\begin{center}
\noindent {\bf Theorem \ref{l:THEOREM4}: An alternative momentum equation}
\end{center}
\medskip
This Appendix shows how rewrite the momentum equation in a different format. Applying equation \eqref{l:EQMOT9} in Theorem \ref{l:THEOREM3} for velocity $v_2$ yields
\begin{align*}
v_2  & = (\phi\phi^T)v_2 + (v_1 - (\phi\phi^T)v_1)
\\
& = (v_1 - (\phi\phi^T)v_1) + \phi\left(\cos\left(\theta\right) (\phi^Tv_1)
+ \gamma_m \sin\left(\theta\right)(\phi^Tw_1)\right)
\\
& = v_1 + \phi\left((\cos\left(\theta\right)-1)(\phi^Tv_1)
+ \gamma_m \sin\left(\theta\right)(\phi^Tw_1)\right)
\\
& = v_1 + \phi\gamma_m \sin\left(\theta\right)
\left(-(\phi^Tv_1) + (\phi^Tw_1)\right)
\\
& = v_1 + \gamma_m \sin\left(\theta\right)
\left(P(\phi)w_1 - P(\phi)v_1\right)=v_1 +
\gamma_m\sin\left(\theta\right)P(\phi)(w_1-v_1),
\end{align*}
and
\begin{align*}
w_2  & = (\phi\phi^T)w_2 + (w_1 -
(\phi\phi^T)w_1)
\\
& = (w_1 - (\phi\phi^T)w_1) + \phi\left(\frac{\sin\left(\theta\right)}{\gamma_m} (\phi^Tv_1)
- \cos\left(\theta\right)(\phi^Tw_1)\right)
\\
& = w_1+ \phi\left(\frac{\sin\left(\theta\right)}{\gamma_m}
(\phi^Tv_1) - (\cos\left(\theta\right)+1)(\phi^Tw_1)\right)
\\
& = w_1+ \phi\frac{\sin\left(\theta\right)}{\gamma_m}\left(
(\phi^Tv_1) - (\phi^Tw_1)\right)
\\
& = w_1+ \frac{\sin\left(\theta\right)}{\gamma_m}\left(
P(\phi)v_1 - P(\phi)w_1\right)= w_1-
\frac{1}{\gamma_m}\sin\left(\theta\right)P(\phi)(w_1-v_1),
\end{align*}
combining $\phi\phi^Tv_1=P(\phi)v_1$,
$\phi\phi^Tw_1=P(\phi)w_1$. This proves equation \eqref{l:EQMOT1P}.

Also taking conditional expectations and substituting the approximations shows that
\begin{align*}
\begin{split}
\begin{pmatrix}
E[v_2 \vert x] \\
E[w_2\vert x]
\end{pmatrix}
\approx &
\begin{pmatrix}
E[v_1\vert x] \\
E[w_1\vert x]
\end{pmatrix}
+
\begin{pmatrix}
\gamma_m \sin\left(\theta\right) E[P(\phi)] (E[w_1\vert x]-E[v_1\vert x]) \\
- \frac{1}{\gamma_m}\sin\left(\theta\right)E[P(\phi)] (E[w_1\vert x]-E[v_1\vert x])
\end{pmatrix}
\\
\approx &
\begin{pmatrix}
E[v_1\vert x] \\
E[w_1\vert x]
\end{pmatrix}
+
\begin{pmatrix}
\gamma_m \sin\left(\theta\right) (E[w_1\vert x]-E[v_1\vert x]) \\
- \frac{1}{\gamma_m}\sin\left(\theta\right)(E[w_1\vert x]-E[v_1\vert x])
\end{pmatrix}
\approx
\begin{pmatrix}
E[v_1\vert x] \\
E[w_1\vert x]
\end{pmatrix},
\end{split}
\end{align*}
where $E[v_2\vert x], E[v_1\vert x]$, $E[w_2\vert x]$ and $E[w_1\vert x]$ are conditional distributions of velocities given location $x$ and $E[P(\phi)]\rightarrow I$. Clearly the $v_2,w_2$ will not change in distribution if the size of $v$ matches the size of $w_1$ otherwise particle $v$ will accelerate in speed to match the environment energy. If $v$ moves too quickly the environment will slow it down.

\newpage
\renewcommand{\theequation}{C.\arabic{equation}}
\setcounter{equation}{0}
\section*{Appendix C}
\label{AppendixC}
\begin{center}
\noindent {\bf Theorem \ref{l:THEOREM5}: The Classical Elastic Energy Constraint}
\end{center}
\medskip
This Appendix describes the energy of the classical elastic collision between the perfectly spherical main and incident particles. This equation can be verified by direct substitution however the following simple argument is more intuitive. First notice that from the definitions $\sin(\theta) = 2\gamma_m/(1+\gamma_m^2)$ and $\cos(\theta)=(1-\gamma_m^2)/(1+\gamma_m^2)$ it is straighforward that
\begin{align*}
\sin\left(\theta/2\right)=\frac{\gamma_m}{\sqrt{1+\gamma_m^2}},
\cos\left(\theta/2\right)=\frac{1}{\sqrt{1+\gamma_m^2}},
\end{align*}
so that
\begin{align*}
\frac{\sin\left(\theta/2\right)}{\cos\left(\theta/2\right)}=\gamma_m.
\end{align*}

To consider the sum of the squared 3-dimensional differences $v_2-v_1$ and $v_2+v_1$ for the main particle we first focus on the $v_2-v_1$ difference. Using \eqref{l:REL17} of Appendix A it is found that
\begin{align*}
\frac{v_2-v_1}{2}
&=\frac{1}{2} (\cos(\theta)-1)v_1+\frac{1}{2}\gamma_m\sin(\theta)w_1
+\frac{\gamma_m}{2}\Phi
\\
&=\sin\left(\theta/2\right)\big(-\sin\left(\theta/2\right)v_1
+\gamma_m\cos\left(\theta/2\right)w_1\big)
+\frac{\gamma_m}{2}\Phi
\\
&=\sin\left(\theta/2\right)A_p
+\frac{\gamma_m}{2}\Phi,
\end{align*}
with
\begin{align*}
A_p & = -\sin\left(\theta/2\right)v_1
+\gamma_m\cos\left(\theta/2\right)w_1
\\
& = \sin\left(\theta/2\right)\left(-v_1
+w_1\right),
\end{align*}
and as a result
\begin{align*}
\frac{v_2-v_1}{2\gamma_m}
&=\cos\left(\theta/2\right)A_p +\frac{1}{2}\Phi.
\end{align*}
But, we know that $\Phi^T(v_2-v_1)=0$ by Theorem
\ref{l:THEOREM3} so then
\begin{align*}
\sin\left(\theta/2\right)\Phi^TA_p
+\frac{\gamma_m}{2}\abs{\Phi}^2=0,
\end{align*}
or dividing by $\gamma_m$
\begin{align*}
\cos\left(\theta/2\right)\Phi^TA_p
+\frac{1}{2}\abs{\Phi}^2=0.
\end{align*}
Using this
\begin{align}
\label{l:APP8}
\begin{split}
\abs{\frac{v_2-v_1}{2}}^2
&=\gamma_m^2\left(\cos^2\left(\theta/2\right)\abs{A_p}^2
+\cos\left(\theta/2\right)\Phi^TA_p
+\frac{1}{4}\abs{\Phi}^2\right)
\\
&=\gamma_m^2\cos^2\left(\theta/2\right)\abs{A_p}^2 -\frac{\gamma_m^2}{4}\abs{\Phi}^2.
\end{split}
\end{align}

To get a similar expression for the average positive difference use \eqref{l:REL17} from Appendix A again to show that
\begin{align*}
\frac{v_2+v_1}{2}
& =\frac{1}{2} (\cos(\theta)+1)v_1+\frac{1}{2}\gamma_m\sin(\theta)w_1
+\frac{\gamma_m}{2}\Phi
\\
& = \cos\left(\theta/2\right)\big(\cos\left(\theta/2\right)v_1
+\gamma_m\sin\left(\theta/2\right)w_1\big) + \frac{\gamma_m}{2}\Phi
\\
& = \cos\left(\theta/2\right)B_p + \frac{\gamma_m}{2}\Phi,
\end{align*}
with
\begin{align*}
B_p & =\cos\left(\theta/2\right)v_1
+\gamma_m\sin\left(\theta/2\right)w_1
\\
& =\cos\left(\theta/2\right)\left(v_1
+\gamma_m^2w_1\right).
\end{align*}
Hence the difference squared becomes
\begin{align}
\label{l:APP9}
\abs{\frac{v_2+v_1}{2}}^2
& = \cos\left(\theta/2\right)^2\abs{B_p}^2 +
\gamma_m\cos\left(\theta/2\right)\Phi^TB_p+
\frac{\gamma_m^2}{4}\abs{\Phi}^2.
\end{align}

To get the sum of the squared differences add equation \eqref{l:APP8} and equation \eqref{l:APP9} to find
\begin{align}
\label{APP30}
\begin{split}
E_1 & =\abs{\frac{v_2+v_1}{2}}^2+\abs{\frac{v_2-v_1}{2}}^2
\\
& =\cos\left(\theta/2\right)^2\left(\gamma_m^2\abs{A_p}^2 + \abs{B_p}^2\right)
+\gamma_m\cos\left(\theta/2\right)\Phi^TB_p.
\end{split}
\end{align}
Now this exercise is repeated for $w_1, w_2$ to find the proper squared differences. First use \eqref{l:REL17} of Appendix A for the third time to show that
\begin{align*}
\gamma_m\frac{w_2-w_1}{2}
& =
\gamma_m\left(\frac{\sin(\theta)}{2\gamma_m}v_1
-\frac{1}{2}\left(\cos(\theta)+1\right)w_1
-\frac{1}{2\gamma_m}\Phi\right)
\\
& = \gamma_m\cos^2\left(\theta/2\right)\left(v_1
-w_1\right)
-\frac{1}{2}\Phi
\\
&=\cos\left(\theta/2\right)C_p
-\frac{1}{2}\Phi,
\end{align*}
with
\begin{align*}
C_p & = \gamma_m\cos\left(\theta/2\right)v_1
-\gamma_m\cos\left(\theta/2\right)w_1
\\
& = \sin\left(\theta/2\right)\left(v_1
-w_1\right) =-A_p.
\end{align*}
But we know that $\Phi^T(w_2-w_1)=0$ by momentum conservation $\Phi^T(v_2-v_1)=\gamma_m^2\Phi^T(w_1-w_2)=0$, such that
\begin{align*}
\cos\left(\theta/2\right)\Phi^TC_p
-\frac{1}{2}\abs{\Phi}^2=0.
\end{align*}
As a result
\begin{align}
\label{l:APP10}
\begin{split}
\gamma_m^2\abs{\frac{w_2-w_1}{2}}^2
& = \cos^2\left(\theta/2\right)\abs{C_p}^2
-\cos\left(\theta/2\right)\Phi^TC_p
+\frac{1}{4}\abs{\Phi}^2
\\
&=\cos^2\left(\theta/2\right)\abs{C_p}^2 -\frac{1}{4}\abs{\Phi}^2.
\end{split}
\end{align}

For the positive difference for the $w_2, w_1$ use equation \eqref{l:REL17} in Appendix A one more time from which it follows that
\begin{align*}
\frac{w_2+w_1}{2}
& =\frac{\sin(\theta)}{2\gamma_m}v_1-\frac{1}{2}\left(\cos(\theta)-1\right)w_1
-\frac{1}{2\gamma_m}\Phi
\\
& = \cos\left(\theta/2\right)\big(\cos\left(\theta/2\right)v_1
+\gamma_m\sin\left(\theta/2\right)w_1\big) - \frac{1}{2\gamma_m}\Phi
\\
& = \cos\left(\theta/2\right)D_p - \frac{1}{2\gamma_m}\Phi,
\end{align*}
with
\begin{align*}
D_p & =\cos\left(\theta/2\right)v_1
+\gamma_m\sin\left(\theta/2\right)w_1
\\
& =\cos\left(\theta/2\right)\left(v_1
+\gamma_m^2w_1\right) = B_p.
\end{align*}
Hence
\begin{align}
\label{l:APP11}
\abs{\frac{w_2+w_1}{2}}^2
& = \cos\left(\theta/2\right)^2\abs{D_p}^2 -
\frac{1}{\gamma_m}\cos\left(\theta/2\right)\Phi^TD_p+
\frac{1}{4\gamma_m^2}\abs{\Phi}^2.
\end{align}

Now adding the squares using \eqref{l:APP10} and
\eqref{l:APP11} results in
\begin{align}
\label{APP31}
\begin{split}
E_2 = & \gamma_m^2\left(\abs{\frac{w_2+w_1}{2}}^2+\abs{\frac{w_2-w_1}{2}}^2\right)
\\
& = \cos\left(\theta/2\right)^2\left(\abs{A_p}^2 + \gamma_m^2\abs{B_p}^2\right)
-\gamma_m\cos\left(\theta/2\right)\Phi^TB_p.
\end{split}
\end{align}
Finally
\begin{align*}
E_1+ E_2 & = \cos\left(\theta/2\right)^2\left(1+\gamma_m^2\right)\left(\abs{A_p}^2 + \abs{B_p}^2\right)= \frac{2\h_1}{M}
\\
& = \left(\abs{A_p}^2 + \abs{B_p}^2\right)= \frac{2\h_1}{M},
\end{align*}
because
\begin{align*}
\abs{A_p}^2 + \abs{B_p}^2
& = \abs{\sin\left(\theta/2\right)\left(-v_1
+w_1\right)}^2 + \abs{\cos\left(\theta/2\right)\left(v_1
+\gamma_m^2w_1\right)}^2
\\
& = \abs{v_1}^2 + \sin\left(\theta/2\right)^2\abs{w_1}^2 + \gamma_m^4\cos\left(\theta/2\right)^2\abs{w_1}^2
\\
& = \abs{v_1}^2 + \gamma_m^2\cos(\theta/2)^2\left(1 + \gamma_m^2\right)\abs{w_1}^2
\\
& = \abs{v_1}^2 + \gamma_m^2\abs{w_1}^2=\frac{2\h_1}{M}.
\end{align*}
This proves
Theorem \ref{l:THEOREM5}.

\newpage
\renewcommand{\theequation}{D.\arabic{equation}}
\setcounter{equation}{0}
\section*{Appendix D}
\label{AppendixD}
\begin{center}
\noindent {\bf Theorem \ref{l:HAMILT12}: The Schr\"{o}dinger Equation}
\end{center}
\medskip
This Appendix shows that a Schr\"{o}dinger equation for the position of the main particle \eqref{l:SCHROD5} solves the classical elastic energy solution \eqref{l:HAMILT1} following the proof as presented by Nelson in 1966 ~\cite{ENELSON3} and ~\cite{ENELSON2}. The proof is repeated here for the sake of completeness. Notice that equation \eqref{l:APP13a} suggests that $\rho\sim\exp{{\left[\frac{2 R}{\sigma^2}\right]}}$ where the amplitude $R$ is scaled to normalize the density and equation \eqref{l:APP13b} implies that $b^++b^-$ can be represented in a gradient of the phase $S=S(x,t)$. Then using \eqref{l:APP13a} and \eqref{l:APP13b} the expected energy \eqref{l:HAMILT1} can be represented as
\begin{align}
\label{l:HAMILT37}
\begin{split}
\frac{2}{M}E\left[\widetilde{\h}_1\right]
& = E
\begin{matrix}
\abs{\nabla_x S}^2
\end{matrix}
+
E\begin{matrix}
\abs{\nabla_x R}^2
\end{matrix}
+\frac{2}{M}\int\rho(x,t) V(x)dx,
\end{split}
\end{align}
ignoring the constant energy term. To become time-independent this classical conserved energy must have a zero derivative with respect to time hence creating time derivatives of $R(x,t)$, $S(x,t)$ and the partial derivative of $\rho=\rho(x,t)$ satisfying the continuity equation. The potential $V=V(x)$ is assumed to be time-dependent as this assumption makes the calculation easier.

Starting with time behaviour of the probability density it becomes clear from equations \eqref{l:APP13a} and \eqref{l:APP13b} that
\begin{align}
\label{l:HAMILT43}
\frac{\partial \rho}{\partial t}=\rho_t=-\nabla_x^T\left(\frac{b^++b^-}{2}\rho
\right)=-\nabla_x^T\left(\rho\nabla_x S \right),
\end{align}
from which follows
\begin{align}
\label{l:HAMILT41}
\begin{split}
\frac{\partial R}{\partial t} = &R_t=-\left(\frac{\sigma^2}{2}\Delta_x S + (\nabla_xS)^T\nabla_xR\right),
\\
\frac{\partial \nabla_x R}{\partial t}=&\nabla_x R_t=-\nabla_x\left(\frac{\sigma^2}{2}
\Delta_x S + (\nabla_xS)^T\nabla_xR\right),
\end{split}
\end{align}
where $\nabla_x^T=\left(\frac{\partial}{\partial x_1}, ...,\frac{\partial}{\partial x_3}\right)$ and $\Delta_x=\left(\frac{\partial^2}{\partial x_1^2} + \frac{\partial^2}{\partial x_2^2} + \frac{\partial^2}{\partial x_3^2}\right)$.

The time derivative of the energy (weighted by mass) \eqref{l:HAMILT37} yields
\begin{align*}
\frac{2}{M}\frac{\partial E\left[\widetilde{\h}_1\right]}{\partial t}
=&\frac{2}{M}\frac{\partial E\left[\widetilde{\h}_2\right]}{\partial t}
\\
=&\int\rho_t
\begin{pmatrix}
\abs{\nabla_x S}^2 + \abs{\nabla_x R}^2
\end{pmatrix}
dx  && \text{(a)}
\\
&+2\int\rho
\begin{pmatrix}
(\nabla_x S)^T\nabla_x S_t + (\nabla_x R)^T\nabla_x R_t
\end{pmatrix}
dx && \text{(b)}
\\
&+\frac{2}{M}\frac{d}{dt}\int\rho(x,t)V(x)dx=0, &&
\text{(c)}
\end{align*}
where $\rho_t$ equals $\rho_t=\frac{\partial \rho(x,t)}{\partial t}$ of the probability density and $S_t$ is the time-derivative of the phase and $R_t$ is the time-derivative of the amplitude. As there are no other processes in this collision the time derivative has to be zero. The three individual terms are examined in order starting with $(a)$.

Substituting $\rho=\exp\left[\frac{2R}{\sigma^2}\right]$ using equation \eqref{l:HAMILT41} the first term $(a)$ becomes
\begin{align*}
\text{(a)}
&=\int\frac{2}{\sigma^2}\rho R_t
\begin{pmatrix}
\abs{\nabla_x S}^2 + \abs{\nabla_x R}^2
\end{pmatrix}
dx
\\
&=\frac{2}{\sigma^2}\int\rho
\begin{pmatrix}
\left(-\frac{\sigma^2}{2} \Delta_x S -
(\nabla_x S)^T\nabla_x R\right)
\left(
\abs{\nabla_x S}^2 + \abs{\nabla_x R}^2
\right)
\end{pmatrix}
dx
\\
&=-\int \rho \Delta_x S
\begin{pmatrix}
\abs{\nabla_x S}^2 + \abs{\nabla_x R}^2
\end{pmatrix}
dx
-\frac{2}{\sigma^2}\int\rho (\nabla_x S)^T\nabla_x R
\begin{pmatrix}
\abs{\nabla_x S}^2 + \abs{\nabla_x R}^2
\end{pmatrix}
dx.
\end{align*}
So with one partial integral on the first term this reduces to
\begin{align*}
\text{(a)}
=&\int\rho (\nabla_x S)^T\nabla_x
\begin{pmatrix}
\abs{\nabla_x S}^2 + \abs{\nabla_x R}^2
\end{pmatrix}
dx
\\
&+\int (\nabla_x \rho)^T \nabla_x S
\begin{pmatrix}
\abs{\nabla_x S}^2 + \abs{\nabla_x R}^2
\end{pmatrix}
dx
\\
&-\frac{2}{\sigma^2}\int\rho (\nabla_x S)^T\nabla_x R
\begin{pmatrix}
\abs{\nabla_x S}^2 + \abs{\nabla_x R}^2
\end{pmatrix}
dx,
\end{align*}
or
\begin{align*}
\text{(a)} =&\int \rho (\nabla_x S)^T \nabla_x
\begin{pmatrix}
\abs{\nabla_x S}^2 + \abs{\nabla_x R}^2
\end{pmatrix}
dx
\\
&+\frac{2}{\sigma^2}\int \rho (\nabla_x R)^T \nabla_ xS
\begin{pmatrix}
\abs{\nabla_x S}^2 + \abs{\nabla_x R}^2
\end{pmatrix}
dx
\\
&-\frac{2}{\sigma^2}\int\rho
(\nabla_x R)^T \nabla_x S
\begin{pmatrix}
\abs{\nabla_x S}^2 + \abs{\nabla_x R}^2
\end{pmatrix}
dx,
\end{align*}
so that finally
\begin{align*}
\text{(a)}=&\int \rho (\nabla_x S)^T \nabla_x
\begin{pmatrix}
\abs{\nabla_x S}^2 + \abs{\nabla_x R}^2
\end{pmatrix}
dx.
\end{align*}

To reduce the $(b)$ term use equation \eqref{l:HAMILT41} again for $\nabla_x R$ and one partial derivative to yield
\begin{align*}
\text{(b)}= & 2\int\rho
\begin{pmatrix}
(\nabla_x S)^T \nabla_x S_{t}
- (\nabla_x R)^T\nabla_x\left(\frac{\sigma^2}{2}\Delta_x S + (\nabla_x S)^T\nabla_x R\right)
\end{pmatrix}
dx
\\
=&2\int\rho
\begin{pmatrix}
(\nabla_x S)^T \nabla_x S_{t} + \Delta_x R
\begin{pmatrix}
\frac{\sigma^2}{2} \Delta _x S + (\nabla_x S)^T\nabla_x R
\end{pmatrix}
\\
+\frac{2}{\sigma^2}\abs{\nabla_x R}^2
\begin{pmatrix}
\frac{\sigma^2}{2} \Delta _x S + (\nabla_x S)^T\nabla_x R
\end{pmatrix}
\end{pmatrix}
dx
\\
=&2\int\rho
\begin{pmatrix}
(\nabla_x S)^T \nabla_x S_{t}
+\frac{\sigma^2}{2}\Delta_x R \Delta_x S + \Delta_x R(\nabla_x S)^T\nabla_x R
\\
+ \abs{\nabla_x R}^2 \Delta_x S + \frac{2}{\sigma^2}\abs{\nabla_x R}^2 (\nabla_x S)^T\nabla_x R
\end{pmatrix}
dx,
\end{align*}
using one partial integral on the second term. Rearranging the terms this can be written as
\begin{align*}
\text{(b)}= &2\int\rho
\begin{pmatrix}
(\nabla_x S)^T\nabla_x S_t + (\nabla_x S)^T(\nabla_x R)\left(\Delta_x R +\frac{2}{\sigma^2}\abs{\nabla_x R}^2\right)
\end{pmatrix}
dx
\\
&+2\int\rho \Delta_x S
\begin{pmatrix}
\frac{\sigma^2}{2}\Delta_xR + \abs{\nabla_x R}^2
\end{pmatrix}
dx,
\end{align*}
and with one partial integral against the last term
this becomes
\begin{align*}
\text{(b)}= &2\int\rho
\begin{pmatrix}
(\nabla_x S)^T\nabla_x S_t + (\nabla_x S)^T(\nabla_x R)\left(\Delta_x R +\frac{2}{\sigma^2}\abs{\nabla_x R}^2\right)
\end{pmatrix}
dx
\\
&-2\int\rho
(\nabla_x S)^T\nabla_x
\begin{pmatrix}
\frac{\sigma^2}{2}\Delta_xR + \abs{\nabla_x R}^2
\end{pmatrix}
dx
\\
&-2\int\rho (\nabla_xS)^T\nabla_x R
\begin{pmatrix}
\Delta_xR +\frac{2}{\sigma^2} \abs{\nabla_x R}^2
\end{pmatrix}
dx.
\end{align*}
Clearly the second term and the fourth term cancel so $(b)$ reduces to
\begin{align*}
\text{(b)}= &2\int\rho
(\nabla_x S)^T\nabla_x S_t - (\nabla_xS)^T\nabla_x
\begin{pmatrix}
\frac{\sigma^2}{2}\Delta_xR + \abs{\nabla_x R}^2
\end{pmatrix}
dx.
\end{align*}

So then $(a)+ (b)$ becomes
\begin{align*}
\text{(a)}+\text{(b)}&
=\int \rho (\nabla_x S)^T \nabla_x
\begin{pmatrix}
\abs{\nabla_x S}^2 + \abs{\nabla_x R}^2
\end{pmatrix}
dx
\\
&+2\int\rho
\left((\nabla_x S)^T\nabla_x S_t - (\nabla_xS)^T\nabla_x
\begin{pmatrix}
\frac{\sigma^2}{2}\Delta_xR + \abs{\nabla_x R}^2
\end{pmatrix}
\right)
dx,
\end{align*}
which reduces to
\begin{align}
\label{l:DRFT17}
\begin{split}
\text{(a)}+\text{(b)}
= \int\rho (\nabla_x S)^T\nabla_x
\begin{pmatrix}
2S_t+\abs{\nabla_x S}^2 -\abs{\nabla_x R}^2 - \sigma^2\Delta_x R
\end{pmatrix}
dx,
\end{split}
\end{align}
and adding the second term $\text{(c)}$ to the results $(a) + (b)$ the time change of the potential becomes
\begin{align}
\label{l:DRFT131}
\begin{split}
\text{(a)}+\text{(b)}+\text{(c)}
=&\int \rho (\nabla_x S)^T\nabla_x
\begin{pmatrix}
2S_t+\abs{\nabla_x S}^2 -\abs{\nabla_x R}^2 - \sigma^2\Delta_x R
\end{pmatrix}
dx
\\
&+\frac{2}{M}\int\rho (\nabla_xS)^T
\nabla_x V(x)dx.
\end{split}
\end{align}
A sufficient condition for the Hamiltonian to become time-independent is that the integrand in \eqref{l:DRFT131} is to equal zero. In other words
\begin{align}
\label{l:SCHROD6}
S_{t}-\frac{1}{2}\abs{\nabla_x R}^2
+\frac{1}{2} \abs{\nabla_x S}^2 -\frac{\eta}{2M}\Delta_x R
+\frac{1}{M}V(x)=0,
\end{align}
defining $\eta = M\sigma^2$ and therefore implying that $\sigma^2 =\eta/M$.

The proof of Theorem \ref{l:HAMILT12} involves a straightforward verification of the equation \eqref{l:SCHROD5} from \eqref{l:SCHROD6}.  Expanding the derivatives on the wave function \eqref{l:DRFT23} shows that
\begin{align*}
\frac{\eta}{iM}\psi_t&=\psi\left( -i R_t+S_t\right),
\\
\frac{\eta}{2M}\nabla_x\psi&=\frac{1}{2}\psi\left( \nabla_xR+i\nabla_xS\right),
\\
\frac{\eta^2}{2M^2}\Delta_x\psi&=\frac{1}{2}\psi\abs{\nabla_xR+i\nabla_xS}^2 + \frac{\eta}{2M}\psi\left( \Delta_{x}R+i \Delta_{x}S\right),
\end{align*}
which combined becomes
\begin{align*}
\frac{\eta}{iM}\psi_t-\frac{\eta^2}{2M^2}\Delta_x\psi =
&-i\psi\left( R_t + (\nabla_xR)^T\nabla_xS+\frac{\eta}{2M}\Delta_xS\right)
\\
&+ \psi\left( S_t-\frac{1}{2}\abs{\nabla_x R}^2 +
\frac{1}{2}\abs{\nabla_x S}^2
-\frac{\eta}{2M}\Delta_xR\right)
\\
&=-\frac{V(x)}{M}\psi,
\end{align*}
because of \eqref{l:SCHROD6} and the definition of $R_t$ in \eqref{l:HAMILT41}. Applying the terms moving the second derivative to the other side reduces this equation to
\begin{align*}
\frac{\eta}{iM}\psi_t=\frac{\eta^2}{2M^2}\Delta_x\psi
-\frac{V(x)}{M}\psi,
\end{align*}
which after multiplication with the mass $M$ and the complex number $i^2=-1$ yields Schr\"{o}dinger's equation \eqref{l:SCHROD5} with $\eta = M\sigma^2$. Notice that
\begin{align*}
\rho(x,t)=\abs{\psi(x,t)}^2=e^{\frac{2M R}{\eta}}=e^{\frac{2R}{\sigma^2}},
\end{align*}
which is consistent with our original assumption.

To determine the current energy levels notice that equation \eqref{l:DRFT23}
implies that
\begin{align*}
\nabla_x \psi = \frac{M}{\eta}\left(\nabla_x S + i\nabla_x R\right)\psi,
\end{align*}
so that
\begin{align*}
\abs{\nabla_x \psi}^2 = \frac{M^2}{\eta^2}\left(\abs{\nabla_x S}^2 + \abs{\nabla_x R}^2\right)\abs{\psi}^2,
\end{align*}
and
\begin{align*}
\frac{\abs{\nabla_x \psi}^2}{\rho} = \frac{\abs{\nabla_x \psi}^2}{\abs{\psi}^2} = \frac{M^2}{\eta^2}\left(\abs{\nabla_x S}^2 + \abs{\nabla_x R}^2\right).
\end{align*}
Taking the expectation over this expression
\begin{align}
\label{APP27}
E\left[\frac{\abs{\nabla_x \psi}^2}{\abs{\psi}^2}\right] = \int\frac{\abs{\nabla_x \psi}^2}{\rho}\rho dx =
\int\abs{\nabla_x \psi}^2 dx.
\end{align}

To calculate the total energy \eqref{l:HAMILT37} invert equation \eqref{APP27} then
\begin{align*}
\int\abs{\nabla_x \psi}^2dx = & \frac{M^2}{\eta^2}\int\abs{\nabla_x S}^2\rho dx + \frac{M^2}{\eta^2}\int\abs{\nabla_x R}^2\rho dx
\\
= & \frac{M^2}{\eta^2}\left(E\left[\abs{\nabla_x S}^2\right] + E\left[\abs{\nabla_x R}^2\right]\right),
\end{align*}
so that
\begin{align*}
E\left[\widetilde{\h}_1\right] = E\left[\widetilde{\h}_2\right]
& = \frac{M}{2}
\left(E\abs{\nabla_x S}^2 + E\abs{\nabla_x R}^2\right) + \int\rho(x,t) V(x)dx
\\
&= \frac{M}{2}\frac{\eta^2}{M^2}
E\left[\abs{\frac{\nabla_x \psi}{\psi}}^2\right] + \int\rho(x,t) V(x)dx
\\
&= \frac{\eta^2}{2M}
\int\left[\abs{\nabla_x \psi}^2\right]dx + \int\rho(x,t) V(x)dx,
\end{align*}
which corresponds to \eqref{l:HAMILT37}.

\newpage
\renewcommand{\theequation}{E.\arabic{equation}}
\setcounter{equation}{0}
\section*{Appendix E}
\label{AppendixE}
\begin{center}
\noindent {\bf Theorem \ref{l:THEOREM2}: Eigenvector, Eigenvalue solution}
\end{center}
\medskip
This Appendix employs the Eigenvectors and Eigenvalues for the collision matrix to solve the classical energy constraint \eqref{l:DRFT4} and create a solution to the momentum equation. With the Eigenvectors $1, -1$ and Eigenvalues of the collision matrix in \eqref{l:EQMOT9} it is easy to see that
\begin{align}
\label{APP36}
\begin{split}
&\begin{pmatrix}
\cos(\theta) & \gamma_m\sin(\theta) \\
\frac{\sin(\theta)}{\gamma_m} & -\cos(\theta)
\end{pmatrix}
\begin{pmatrix}
a
\\a
\end{pmatrix}
=
\begin{pmatrix}
a\\a
\end{pmatrix},
\\ &
\begin{pmatrix}
\cos(\theta) & \gamma_m\sin(\theta) \\
\frac{\sin(\theta)}{\gamma_m} & -\cos(\theta)
\end{pmatrix}
\begin{pmatrix}
-\gamma_m^2g
\\
g
\end{pmatrix}
=
-\begin{pmatrix}
-\gamma_m^2g
\\
g
\end{pmatrix},
\end{split}
\end{align}
where the vectors $a$ and $g$ can be derived from the 3-dimensional vectors $v_1,v_2,w_1$ and $w_2$. In fact, to determine $g^\bot$ let
\begin{align}
\label{APP38}
\begin{pmatrix}
v_1
\\
w_1
\end{pmatrix}=
\begin{pmatrix}
a
\\
a
\end{pmatrix}
+
\begin{pmatrix}
-\gamma_m^2g
\\
g
\end{pmatrix},
\end{align}
then equation \eqref{l:EQMOT9} shows that $v_2$ and $w_2$ become
\begin{align}
\label{APP37}
\begin{split}
\begin{pmatrix}
v_2 \\
w_2
\end{pmatrix}
&=
\begin{pmatrix}
\cos(\theta) & \gamma_m\sin(\theta) \\
\frac{\sin(\theta)}{\gamma_m} & -\cos(\theta)
\end{pmatrix}
\begin{pmatrix}
v_1\\
w_1
\end{pmatrix}
+
\begin{pmatrix}
\gamma_m \Phi
\\
-\frac{1}{\gamma_m} \Phi
\end{pmatrix}
\\
&=
\begin{pmatrix}
a
\\
a
\end{pmatrix}
-
\begin{pmatrix}
-\gamma_m^2g
\\
g
\end{pmatrix}
+
\begin{pmatrix}
\gamma_m \Phi
\\
-\frac{1}{\gamma_m} \Phi
\end{pmatrix}
=
\begin{pmatrix}
a
\\
a
\end{pmatrix}
+
\begin{pmatrix}
\gamma_m^2(g+\frac{1}{\gamma_m}\Phi)
\\
-(g+\frac{1}{\gamma_m}\Phi)
\end{pmatrix}
=\begin{pmatrix}
a+\gamma_m^2g^\bot
\\
a-g^\bot
\end{pmatrix},
\end{split}
\end{align}
which is showm in equation \eqref{l:EQMOT4}.

The Eigenvectors $1,-1$ of the collision matrix in equation \eqref{APP36} and the solution \eqref{APP37} show that $M(1+\gamma_m^2)a = Mv_1+mw_1= Mv_2+mw_2$ while \eqref{APP38} suggests
$(1+\gamma_m^2)g=(w_1-v_1)$ and \eqref{APP37} shows that $(1+\gamma_m^2)g^{\bot}=-(w_2-v_2)$. Now consider the pre-collision energy $\h_1$ in the mean velocity $a$ and $g$ using equation \eqref{APP38} then
\begin{align*}
\frac{2}{M}\h_1
&=\abs{v_1}^2+\gamma_m^2\abs{w_1}^2=\abs{a-\gamma_m^2g}^2+\gamma_m^2\abs{a+g}^2
\\
&=(1+\gamma_m^2)\abs{a}^2 +\gamma_m^2(1+\gamma_m^2)\abs{g}^2.
\end{align*}
From equation \eqref{APP37} by the same procedure the post-collision energy can be determined
\begin{align*}
\frac{2}{M}\h_2
&=\abs{v_2}^2+\gamma_m^2\abs{w_2}^2=\abs{a+\gamma_m^2g^\bot}^2+\gamma_m^2\abs{a-g^\bot}^2
\\
&=(1+\gamma_m^2)\abs{a}^2 +\gamma_m^2(1+\gamma_m^2)\abs{g^\bot}^2,
\end{align*}
and since this is an elastic collision
\begin{align*}
\frac{2}{M}\h_1
&=(1+\gamma_m^2)\abs{a}^2 +\gamma_m^2(1+\gamma_m^2)\abs{g}^2
\\
&= (1+\gamma_m^2)\abs{a}^2 +\gamma_m^2(1+\gamma_m^2)\abs{g^\bot}^2=\frac{2}{M}\h_2,
\end{align*}
so that clearly
\begin{align*}
\gamma_m^2(1+\gamma_m^2)\abs{g}^2 = \gamma_m^2(1+\gamma_m^2)\abs{g^\bot}^2,
\end{align*}
and $\abs{g}^2=\abs{g^\bot}^2$. This proves Theorem \ref{l:THEOREM2}.

\newpage
\renewcommand{\theequation}{F.\arabic{equation}}
\setcounter{equation}{0}
\section*{Appendix F}
\label{AppendixF}
\begin{center}
\noindent {\bf Theorem \ref{l:THEOREM10}: The Minkowski Metric}
\end{center}
\medskip
Notice from equations \eqref{l:EQMOT4} it follows that
\begin{align}
\label{l:EQMOT5}
\begin{split}
v_1-w_1 &= (1+\gamma_m^2)g,
\\
v_1+\gamma_m^2w_1 &= (1+\gamma_m^2)a,
\end{split}
\end{align}
and similarly that
\begin{align}
\label{l:EQMOT6}
\begin{split}
v_2-w_2 &= -(1+\gamma_m^2)g^\bot,
\\
v_2+\gamma_m^2w_2 &= (1+\gamma_m^2)a,
\end{split}
\end{align}
and since $\abs{g}^2 = \abs{g^\bot}^2$ as a result
\begin{align*}
\begin{split}
\begin{split}
\abs{v_1-w_1}^2 &= \abs{v_2-w_2}^2,
\\
v_2+\gamma_m^2w_2 &= v_1+\gamma_m^2w_1,
\end{split}
\end{split}
\end{align*}
or
\begin{align*}
\begin{split}
\begin{split}
\abs{v_1-w_1}^2 &= \abs{v_2-w_2}^2,
\\
a = \frac{Mv_2+mw_2}{M+m} &= \frac{Mv_1+mw_1}{M+m}.
\end{split}
\end{split}
\end{align*}
The second equation shows the conservation of the mean momentum $a$ through the collision while the first equation shows that the noise terms $g, g^\bot$ are conserved through the term
$(1+\gamma_m^2)^2\abs{g}^2=\abs{w_1-v_1}^2=\abs{w_2-v_2}^2=(1+\gamma_m^2)^2\abs{g^\bot}^2$.

From this it is clear that
\begin{align}
\label{APP39}
\begin{split}
g & = \frac{v_1-w_1}{(1+\gamma_m^2)}, g^\bot = \frac{w_2-v_2}{(1+\gamma_m^2)},
\\
a & = \frac{Mv_2+mw_2}{M+m} = \frac{Mv_1+mw_1}{M+m},
\end{split}
\end{align}
where $g^\bot=g + \frac{1}{\gamma_m}\Phi$, $\abs{g}^2 = \abs{g^\bot}^2$ and where the function $\Phi$ is defined in Theorem \ref{l:THEOREM3}. This shows equation \eqref{APP23}.

To demonstrate equation \eqref{l:THEOREM2} use the Eigenvectors $1,-1$ of the collision matrix again and use equation \eqref{l:EQMOT4} to derive
\begin{align*}
\abs{w_2-a}^2 - \abs{v_2-a}^2
= \abs{-g^\bot}^2 - \abs{\gamma_m^2g^\bot}^2
= (1-\gamma_m^4)\abs{g^\bot}^2,
\end{align*}
and
\begin{align*}
\abs{w_1-a}^2 - \abs{v_1-a}^2
= \abs{g}^2-\abs{-\gamma_m^2g}^2
= (1-\gamma_m^4)\abs{g}^2.
\end{align*}
Then
\begin{align*}
\abs{w_2-a}^2 - \abs{v_2-a}^2 -\left(
\abs{w_1-a}^2 - \abs{v_1-a}^2\right)
=
(1-\gamma_m^4)(\abs{g^\bot}^2-\abs{g}^2)
=0,
\end{align*}
because $\abs{g}^2=\abs{g^\bot}^2$.  So
\begin{align*}
\abs{w_2-a}^2 - \abs{v_2 - a}^2 = \abs{w_1-a}^2 - \abs{v_1 - a}^2,
\end{align*}
which proves equation \eqref{APP21}.

To show the Minkowski metric in equation \eqref{APP22} notice the motion of all particles is constrained vis-a-vis the average velocity $a$. However, if $M >> m$ then the individual collision will not make any reason for the motion to change so $a$ is uncorrelated from $v_1,v_2$. To remove the dependence on the vector $a$ remove the vector from \eqref{APP21} then it is clear that
\begin{align*}
& \abs{w_2-a}^2 - \abs{v_2 - a}^2 - \left( \abs{w_1-a}^2 - \abs{v_1 - a}^2\right)
\\
& = \abs{w_2}^2 - \abs{v_2}^2 - \left(\abs{w_1}^2 - \abs{v_1}^2\right) + 2a^T\left(v_2-w_2 -(v_1-w_1)\right)
\\
& = \abs{w_2}^2 - \abs{v_2}^2 - \left(\abs{w_1}^2 - \abs{v_1}^2\right) -2(1+\gamma_m^2)a^T\left(g+g^\bot\right),
\end{align*}
so that
\begin{align*}
0 & = E\abs{w_2-a}^2 - E\abs{v_2 - a}^2 - \left( E\abs{w_1-a}^2 - E\abs{v_1 - a}^2\right)
\\
& = E\abs{w_2}^2 - E\abs{v_2}^2 - \left(E\abs{w_1}^2 - E\abs{v_1}^2\right) -2(1+\gamma_m^2)E\left[a^T\left(g+g^\bot\right)\right].
\end{align*}

However, the correlation between $a$ and $g+g^\bot$ is equal to
\begin{align*}
\rho=\frac{E\left[a^T(g+g^\bot)\right]}{\sqrt{E\abs{a}^2
E\abs{g+g^\bot}^2}},
\end{align*}
so that
\begin{align*}
E\abs{w_2}^2 - E\abs{v_2}^2 - \left(E\abs{w_1}^2 - E\abs{v_1}^2\right) =2(1+\gamma_m^2)\rho\sqrt{E\abs{a}^2
E\abs{g+g^\bot}^2},
\end{align*}
hence
\begin{align}
\label{APP40}
E\abs{w_2}^2 - E\abs{v_2}^2 = E\abs{w_1}^2 - E\abs{v_1}^2,
\end{align}
if $\rho$ equals zero. Notice that for large mass where $m\ll M$ it is clear that $a \approx v_1 \approx v_2$ and $b+b^\bot \approx w_1-w_2$ so then $E\left[a^T(g+g^\bot)\right] \approx E[v(w_2-w_1)] \approx vE[(w_2-w_1)] \approx 0$. So a large mass $M$ moving through space at a constant speed not loosing or gaining energy would experience $E\left[a^T(g+g^\bot)\right]=0$ and the correlation becomes zero.

However, there is another reason for \eqref{APP40} to hold. Write from equation \eqref{l:EQMOT4} the main particle velocity differences
\begin{align*}
\abs{v_1}^2 & = \abs{a}^2-\gamma_m^2a^Tg+\abs{g}^2,
\\
\abs{v_2}^2 & = \abs{a}^2+\gamma_m^2a^Tg^\bot+\abs{g^\bot}^2,
\end{align*}
then the amount of energy gained or lost equals
\begin{align*}
2\Delta E = M\left(\abs{v_2}^2-\abs{v_1}^2\right) = \gamma_m^2a^Tg^{\bot}+\abs{g}^2+\gamma_m^2a^Tg-\abs{g^\bot}^2
=\gamma_m^2a^T\left(g^{\bot}+g\right).
\end{align*}
This means that $\gamma_m^2a^T\left(g^{\bot}+g\right)$ shows the amount of energy leaving or arriving in the system so for a static system with $\Delta E=0$ the correlation is zero and the Minkowski metric in \eqref{APP40} holds.

Finally, it is interesting to see that for small $m$ where $m<<M,\gamma_m^2\rightarrow 0$ it is clear that $\Delta E$ is very close to zero. A large mass with mass $M$ moving through space meets almost exclusively by photons with mass $m=0$ and $\gamma_m=0$ so indeed $\Delta E\rightarrow0$ proving the Minkowski metric \eqref{APP40} again. This proves the remaining statements in Theorem \ref{APP23}.

\newpage
\end{document}